\tikzset{>={Latex[width=3pt,length=5pt]}}
\newcommand{\SPRAY}[0]{\textsc{Spray}\xspace}
\newcommand{\PEERSIM}[0]{\textsc{PeerSim}\xspace}
\newcommand{\CBROADCAST}[0]{\textup{PC-broadcast}\xspace}
\newtheorem{definition}{Definition}
\newtheorem{theorem}{Theorem}
\newtheorem{lemma}{Lemma}
\begin{document}

\title{Breaking the Scalability Barrier of Causal Broadcast\\for Large
  and Dynamic Systems}

\newcommand{\affLSNN}{LS2N, University of Nantes\\
  2 rue de la Houssini{\`e}re\\
  BP 92208, 44322 Nantes Cedex 3, France\\
  \url{first.last@univ-nantes.fr}}

\author{Brice N{\'e}delec, Pascal Molli, and Achour Most{\'e}faoui \aff \affLSNN}

\proceedings{}

\copyright{Copyright Brice N{\'e}delec, Pascal Molli, Achour Most{\'e}faoui}

\maketitle

\begin{abstract}  
  Many distributed protocols and applications rely on causal broadcast to ensure
  consistency criteria.  However, none of causality tracking state-of-the-art
  approaches scale in large and dynamic systems.  This paper presents a new
  non-blocking causal broadcast protocol suited for dynamic systems.  The
  proposed protocol outperforms state-of-the-art in size of messages, execution
  time complexity, and local space complexity. Most importantly, messages
  piggyback control information the size of which is constant.  We prove that
  for both static and dynamic systems.  Consequently, large and dynamic systems
  can finally afford causal broadcast.
\end{abstract}

\section{Introduction}

Causal broadcast~\cite{hadzilacos1994modular} is a fundamental building block of
many distributed applications~\cite{nakamoto2009bitcoin} such as distributed
social networks~\cite{borthakur2013petabyte}, distributed collaborative
software~\cite{nedelec2016crate,heinrich2012exploiting}, or distributed data
stores~\cite{demers1987epidemic,shapiro2011comprehensive,bailis2013bolton,lloyd2011cops,bravo2017saturn}.
Causal broadcast is a reliable broadcast where all connected processes
deliver each broadcast message exactly once following the happen
before relationship~\cite{lamport1978time,schwarz1994detecting}: when
Alice comments Bob's picture, everyone receives the comment after the
picture; unrelated events are delivered in any order.

Unfortunately, causal broadcast has proven expensive in dynamic environments
where any process can broadcast a message at any
time~\cite{charronbost1991concerning}. While gossiping constitutes an efficient
mean to disseminate messages to millions of
processes~\cite{demers1987epidemic,birman1999bimodal}, ensuring causal
delivery of these messages remains overcostly.  Using state-of-the-art
protocols, each message piggybacks a -- possibly compressed -- vector of
Lamport's
clocks~\cite{almeida2008interval,fidge1988timestamps,mattern1989virtual,singhal1992efficient}.
The message overhead increases monotonically, for entries cannot be reclaimed
without consensus. The message overhead increases linearly with the number of
processes $N$ that ever broadcast a message in the system. Several messages $W$
may differ their delivery, for preceding messages did not arrive
yet~\cite{mehdi2017slowdown}.  The delivery execution time takes linear time
$O(W.N)$ as well.  Causal broadcast protocols based on vectors eventually become
overcostly and inefficient.

To provide causal order, \cite{friedman2004causal} employs a different
strategy. Instead of piggybacking a vector in each message, processes forward
all messages exactly once using FIFO communication means. Gossip encompasses
forwarding so this does not constitute an overhead of the approach.  Messages
arrive ready so they are delivered immediately. This approach is both
lightweight and efficient. However, its scope is restricted to static systems.
In dynamic systems where processes can join, leave, add or remove communication
means, using this approach may lead to causal order violations.

In this paper, we break the scalability barrier of causal broadcast for large
and dynamic systems.  Our contribution is threefold:
\begin{itemize}[leftmargin=*]
\item We provide a powerful extension of~\cite{friedman2004causal} that extends
  its scope to dynamic systems. We prove that adding new communication means
  between processes constitutes the sole factor in causal order violation. Our
  approach solves this using bounded buffers and few control messages.
\item We provide the complexity analysis of our broadcast
  protocol. Table~\ref{table:comparison} compares our protocol to two
  representative solutions. Our approach handles dynamic systems while providing
  constant size overhead on messages, and constant delivery execution time.
\item We provide an experimentation highlighting the impact of our protocol on
  transmission delays before delivery. Indeed, to tolerate dynamicity our
  protocol temporarily disables new communication means.  The experiment shows
  that even under bad network conditions and high dynamicity, our protocol
  hardly degrades the mean transmission time before delivery.
\end{itemize}
Consequently, causal broadcast finally becomes an affordable and efficient
middleware for distributed protocols and applications in large and dynamic
systems.

\begin{table*}
  \begin{center}
  \caption{\label{table:comparison} Complexity of causal
    broadcast protocols. 
    $N$ is the number of processes that ever broadcast a message.
    $W$ is the number of received messages awaiting delivery.
    $P$ is the number of delivered messages that are temporarily kept before 
    being safely purged to forbid double delivery.
  }
  \newcommand{\cmark}{\ding{51}}
\newcommand{\xmark}{\ding{55}}

\begin{tabularx}{1.98\columnwidth}{@{}Xcccc@{}}
  & \makecell{dynamic systems} & \makecell{message overhead} & \makecell{local space consumption} &  \makecell{delivery execution time} \\ \cmidrule{2-5}
  vector-based~\cite{schwarz1994detecting} & \cmark & $O(N)$ & $O(N+W.N)$ & $O(W.N)$ \\
  FIFO+forward~\cite{friedman2004causal} & \xmark & $O(1)$ & $O(P)$ & $O(1)$ \\ \hline\hline
  \textbf{this paper} & \textbf{\cmark} & $\mathbf{O(1)}$ & $\mathbf{O(N)}$ & $\mathbf{O(1)}$ \\ 
\end{tabularx}

   \end{center}
\end{table*}

The rest of this paper is organized as follows: Section~\ref{sec:motivations}
shows the background and motivations of our work. Section~\ref{sec:proposal}
defines our model, describes our proposal, provides the corresponding proofs,
and details its complexity. Section~\ref{sec:experimentation} explains the
results of experimentation.  Section~\ref{sec:relatedwork} reviews the related
work. We conclude in Section~\ref{sec:conclusion}.

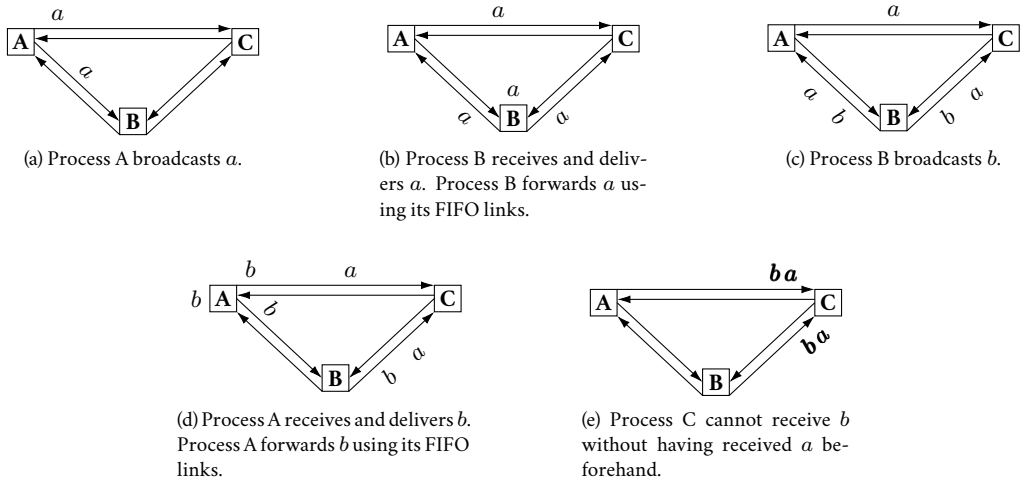
\begin{figure*}
  \begin{center}
    \subfloat[Part A][\label{fig:generalsolveA}Process~A broadcasts $a$.]
    {
\begin{tikzpicture}[scale=1]
  
  \small
  
  \newcommand\X{210/5pt};
  \newcommand\Y{30pt};

  \draw[fill=white] (0*\X, 0*\Y) node{\textbf{A}} +(-5pt, -5pt) rectangle +(5pt, 5pt);
  \draw[fill=white] (1*\X, -1*\Y) node{\textbf{B}} +(-5pt, -5pt) rectangle +(5pt, 5pt);
  \draw[fill=white] (2*\X,  0*\Y) node{\textbf{C}} +(-5pt, -5pt) rectangle +(5pt, 5pt);

  \draw[->](5+0*\X, 0*\Y) -- node[sloped, above]{$a$} (-5+1*\X, -1*\Y); 
  \draw[<-](5+0*\X, -5+0*\Y) -- (-5+1*\X, -5-1*\Y); 

  \draw[->](5+0*\X, 5+0*\Y) node[above right]{$\,\,a$} --  (-5+2*\X, 5+0*\Y); 
  \draw[<-](5+0*\X,  1.25+ 0*\Y) -- (-5+2*\X,  1.25+ 0*\Y); 
 
  \draw[<-](5+1*\X, -1*\Y) -- (-5+2*\X, 0*\Y); 
  \draw[->](5+1*\X, -5-1*\Y) -- (-5+2*\X, -5+0*\Y); 

\end{tikzpicture} }
    \hspace{40pt}
    \subfloat[Part B][\label{fig:generalsolveB}Process~B receives and 
    delivers $a$. Process~B forwards $a$ using its FIFO links.]
    {
\begin{tikzpicture}[scale=1]
  
  \small
  
  \newcommand\X{210/5pt};
  \newcommand\Y{30pt};

  \draw[fill=white] (0*\X, 0*\Y) node{\textbf{A}} +(-5pt, -5pt) rectangle +(5pt, 5pt);
  \draw[fill=white] (1*\X, -1*\Y) node{\textbf{B}} +(-5pt, -5pt) rectangle +(5pt, 5pt);
  \draw[fill=white] (2*\X,  0*\Y) node{\textbf{C}} +(-5pt, -5pt) rectangle +(5pt, 5pt);
  \draw (1*\X, 5-1*\Y) node[above]{$a$};

  \draw[->](5+0*\X, 0*\Y) -- (-5+1*\X, -1*\Y); 
  \draw[<-](5+0*\X, -5+0*\Y) -- node[sloped, below right]{$\,\,a$} (-5+1*\X, -5-1*\Y); 

  \draw[->](5+0*\X, 5+0*\Y) -- node[above left]{$a$} (-5+2*\X, 5+0*\Y); 
  \draw[<-](5+0*\X,  1.25+ 0*\Y) -- (-5+2*\X,  1.25+ 0*\Y); 
 
  \draw[<-](5+1*\X, -1*\Y) -- (-5+2*\X, 0*\Y); 
  \draw[->](5+1*\X, -5-1*\Y) --  node[sloped, below left]{$a\,\,$} (-5+2*\X, -5+0*\Y); 

\end{tikzpicture} }
    \hspace{40pt}
    \subfloat[Part C][\label{fig:generalsolveC}Process~B broadcasts $b$.]
    {
\begin{tikzpicture}[scale=1]
  
  \small
  
  \newcommand\X{210/5pt};
  \newcommand\Y{30pt};

  \draw[fill=white] (0*\X, 0*\Y) node{\textbf{A}} +(-5pt, -5pt) rectangle +(5pt, 5pt);
  \draw[fill=white] (1*\X, -1*\Y) node{\textbf{B}} +(-5pt, -5pt) rectangle +(5pt, 5pt);
  \draw[fill=white] (2*\X,  0*\Y) node{\textbf{C}} +(-5pt, -5pt) rectangle +(5pt, 5pt);

  \draw[->](5+0*\X, 0*\Y) -- (-5+1*\X, -1*\Y); 
  \draw[<-](5+0*\X, -5+0*\Y) -- node[sloped, below]{$a\,\,\,\,\,\,b$}
  (-5+1*\X, -5-1*\Y); 

  \draw[->](5+0*\X, 5+0*\Y) -- node[above]{$a$} (-5+2*\X, 5+0*\Y); 
  \draw[<-](5+0*\X,  1.25+ 0*\Y) -- (-5+2*\X,  1.25+ 0*\Y); 
 
  \draw[<-](5+1*\X, -1*\Y) -- (-5+2*\X, 0*\Y); 
  \draw[->](5+1*\X, -5-1*\Y) --  node[sloped, below]{$b\,\,\,\,\,\,a$} (-5+2*\X, -5+0*\Y); 

\end{tikzpicture} }
    \\
    \subfloat[Part D][\label{fig:generalsolveD}Process~A receives and 
    delivers $b$. Process~A forwards $b$ using its FIFO links.]
    {
\begin{tikzpicture}[scale=1]
  
  \small
  
  \newcommand\X{210/5pt};
  \newcommand\Y{30pt};

  \draw[fill=white] (0*\X, 0*\Y) node{\textbf{A}} +(-5pt, -5pt) rectangle +(5pt, 5pt);
  \draw (-5 + 0*\X, 0*\Y) node[left]{$b$};
  \draw[fill=white] (1*\X, -1*\Y) node{\textbf{B}} +(-5pt, -5pt) rectangle +(5pt, 5pt);
  \draw[fill=white] (2*\X,  0*\Y) node{\textbf{C}} +(-5pt, -5pt) rectangle +(5pt, 5pt);

  \draw[->](5+0*\X, 0*\Y) --
  node[sloped, above left]{$b\,\,\,$}
  (-5+1*\X, -1*\Y); 
  \draw[<-](5+0*\X, -5+0*\Y) -- (-5+1*\X, -5-1*\Y); 

  \draw[->](5+0*\X, 5+0*\Y) node[above right]{$b$}-- node[above right]{$a$} (-5+2*\X, 5+0*\Y); 
  \draw[<-](5+0*\X,  1.25+ 0*\Y) -- (-5+2*\X,  1.25+ 0*\Y); 
 
  \draw[<-](5+1*\X, -1*\Y) -- (-5+2*\X, 0*\Y); 
  \draw[->](5+1*\X, -5-1*\Y) --  node[sloped, below]{$b\,\,\,\,\,\,a$} (-5+2*\X, -5+0*\Y); 

\end{tikzpicture} }
    \hspace{40pt}
    \subfloat[Part E][\label{fig:generalsolveE}Process~C cannot receive $b$
    without having received $a$ beforehand.]
    {
\begin{tikzpicture}[scale=1]
  
  \small
  
  \newcommand\X{210/5pt};
  \newcommand\Y{30pt};

  \draw[fill=white] (0*\X, 0*\Y) node{\textbf{A}} +(-5pt, -5pt) rectangle +(5pt, 5pt);
  \draw[fill=white] (1*\X, -1*\Y) node{\textbf{B}} +(-5pt, -5pt) rectangle +(5pt, 5pt);
  \draw[fill=white] (2*\X,  0*\Y) node{\textbf{C}} +(-5pt, -5pt) rectangle +(5pt, 5pt);

  \draw[->](5+0*\X, 0*\Y) -- (-5+1*\X, -1*\Y); 
  \draw[<-](5+0*\X, -5+0*\Y) -- (-5+1*\X, -5-1*\Y); 

  \draw[->](5+0*\X, 5+0*\Y) -- (-5+2*\X, 5+0*\Y) node[above left]{$\pmb{b\,a}\,\,$}; 
  \draw[<-](5+0*\X,  1.25+ 0*\Y) -- (-5+2*\X,  1.25+ 0*\Y); 
 
  \draw[<-](5+1*\X, -1*\Y) -- (-5+2*\X, 0*\Y); 
  \draw[->](5+1*\X, -5-1*\Y) -- node[sloped, below right]{$\,\,\,\,\,\pmb{b\,a}$} (-5+2*\X, -5+0*\Y); 

\end{tikzpicture} }
    \caption{\label{fig:generalsolve}Causal broadcast~\cite{friedman2004causal}
      ensures causal order.}
  \end{center}
\end{figure*}

\section{Background and motivations}
\label{sec:motivations}

Causal broadcast ensures that all connected processes deliver each broadcast
message exactly once~\cite{hadzilacos1994modular} following the happen before
relationship~\cite{lamport1978time}. If the sending of a message $m$ precedes
the sending of a message $m'$ then all processes that deliver these two messages
need to deliver $m$ before $m'$. Otherwise they can deliver them in any order.

Encoding the logical time at broadcast regarding all other broadcasts and
piggyback this control information in each broadcast message allow processes to
ensure causal order on message delivery. Instead, \cite{friedman2004causal}
uses FIFO links and systematically forwards delivered messages.  Intuitively,
the dissemination pattern automatically makes sure that no paths from a process
to another process carry messages out of causal order.

Figure~\ref{fig:generalsolve} depicts this principle. The system comprises 3
processes connected to each other with FIFO links.  In
Figure~\ref{fig:generalsolveA}, Process~A broadcasts $a$. It sends $a$ to
Process~B and Process~C. In Figure~\ref{fig:generalsolveB}, Process~B receives,
delivers, and forwards $a$. In Figure~\ref{fig:generalsolveC}, it broadcasts
$b$. Consequently, all processes must deliver $a$ before delivering $b$. In
Figure~\ref{fig:generalsolveD}, Process~A receives, delivers, and forwards
$b$. Process~A fulfills the causal order constraint between $a$ and $b$. In
Figure~\ref{fig:generalsolveE}, we see that either directly via Process~B or
indirectly via Process~A, Process~C cannot receive $b$ before $a$. Thus, it
eventually receives, delivers, and forwards the messages following causal order.

\begin{figure}
  \begin{center}
    
\begin{tikzpicture}[scale=0.65]

  \small
  
  \newcommand\X{210/5pt};
  \newcommand\Y{30pt};
  
  \draw[->] ( -1*\X, 1*\Y) -- ( -5+0*\X, 0*\Y);
  \draw[->] ( -1*\X, 0*\Y) node[anchor=east, align=center]{Rest\\of the\\network} -- ( -5+0*\X, 0*\Y);
  \draw[->] ( -1*\X, -1*\Y) -- ( -5+0*\X, 0*\Y);

  \draw[<-] ( 5*\X, 1*\Y) -- ( 5+4*\X, 0*\Y);
  \draw[<-] ( 5*\X, 0*\Y) node[anchor=west, align=center]{Rest\\of the\\network}-- ( 5+4*\X, 0*\Y);
  \draw[<-] ( 5*\X, -1*\Y) -- ( 5+4*\X, 0*\Y);

  \draw[fill=white] (0*\X, 0*\Y) node{\textbf{A}} +(-5pt, -5pt) rectangle +(5pt, 5pt);

  \draw[fill=white] (1*\X, -1*\Y) node{\textbf{B}} +(-5pt, -5pt) rectangle +(5pt, 5pt);

  \draw[fill=white] (2*\X, -2*\Y) node{\textbf{E}} +(-5pt, -5pt) rectangle +(5pt, 5pt);
  \draw[fill=white] (2*\X,  0*\Y) node{\textbf{D}} +(-5pt, -5pt) rectangle +(5pt, 5pt);
  \draw[fill=white] (2*\X,  2*\Y) node{\textbf{C}} +(-5pt, -5pt) rectangle +(5pt, 5pt);

  \draw[fill=white] (4*\X, 0*\Y) node{\textbf{F}} +(-5pt, -5pt) rectangle +(5pt, 5pt);

  \draw[->](5+0*\X, 0*\Y) -- node[sloped, above]{\pmb{$a'\,a$}} (-5+2*\X,  2*\Y); 
  \draw[->](5+0*\X, 0*\Y) -- node[sloped, above]{\pmb{$a'\,a$}} (-5+1*\X, -1*\Y); 

  \draw[->](5+1*\X, -1*\Y) -- node[sloped, above]{$\pmb{a'}\,b\,\pmb{a}$} (-5+2*\X, 0*\Y); 
  \draw[->](5+1*\X, -1*\Y) -- node[sloped, above]{$\pmb{a'}\,b\,\pmb{a}$} (-5+2*\X, -2*\Y); 

  \draw[->](5+2*\X,  2*\Y) -- node[sloped, above]{$c\,\pmb{a'\,a}$} (-5+4*\X, 0*\Y); 
  \draw[->](5+2*\X,  0*\Y) -- node[sloped, above]{$\pmb{a'}\,b\,\pmb{a}\,d$} (-5+4*\X, 0*\Y); 
  \draw[->](5+2*\X, -2*\Y) -- node[sloped, above]{$\pmb{a'}\,b\,\,e'\,e\,\pmb{a}$} (-5+4*\X, 0*\Y);

\end{tikzpicture}     \caption{\label{fig:disseminationtree}The principle of
      \cite{friedman2004causal} works in large systems where processes have
      partial knowledge of the membership.}
  \end{center}
\end{figure}
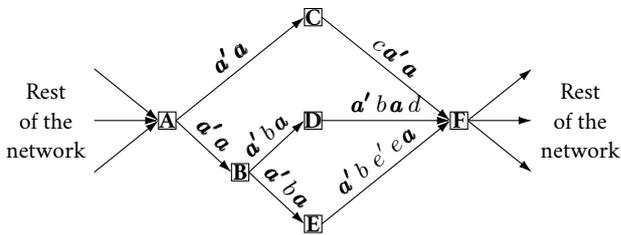

In large systems comprising from hundreds to millions of processes, no process
can afford to maintain the full membership to communicate with. Instead,
processes have a much smaller view called neighborhood. Forwarding messages
allows them to reach all members of the system, either directly or transitively
in a gossip fashion~\cite{demers1987epidemic,birman1999bimodal}. In
large systems, forwarding is mandatory.  Processes pay the price of gossiping
whatever broadcast protocol. They must create and send copies of the original
broadcast message. Since gossiping already encompasses forwarding of messages,
it does not constitute an additional overhead of~\cite{friedman2004causal}.

Figure~\ref{fig:disseminationtree} shows that such causal broadcast ensures
causal order in larger systems where processes have limited knowledge of the
membership.  Process~A only knows about Process~B and Process~C.  Yet,
Process~A's broadcast messages $a$ and $a'$ arrive to all other processes either
directly or transitively. In addition, $a$ and $a'$ always arrive in causal
order at all processes despite concurrency and whatever the dissemination path.

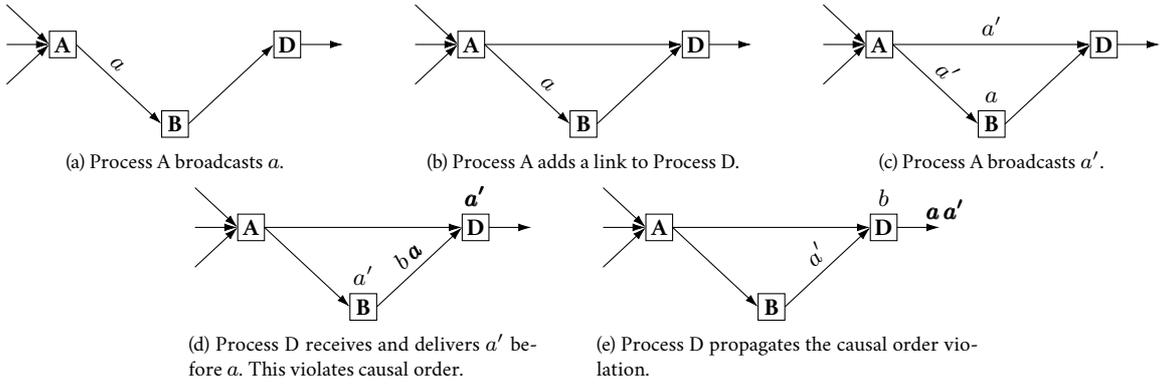
\begin{figure*}
  \begin{center}
    \subfloat[Part 1][\label{fig:preventiveproblemA}Process~A broadcasts $a$.]
    {
\begin{tikzpicture}[scale=1]
  
  \small
  
  \newcommand\X{210/5pt};
  \newcommand\Y{30pt};
  
  \draw[->] ( -0.5*\X, 0.5*\Y) -- ( -5+0*\X, 0*\Y);
  \draw[->] ( -0.5*\X, 0*\Y) -- ( -5+0*\X, 0*\Y);
  \draw[->] ( -0.5*\X, -0.5*\Y) -- ( -5+0*\X, 0*\Y);  

  \draw[fill=white] (0*\X, 0*\Y) node{\textbf{A}} +(-5pt, -5pt) rectangle +(5pt, 5pt);
  \draw[fill=white] (1*\X, -1*\Y) node{\textbf{B}} +(-5pt, -5pt) rectangle +(5pt, 5pt);
  \draw[fill=white] (2*\X,  0*\Y) node{\textbf{D}} +(-5pt, -5pt) rectangle +(5pt, 5pt);

  \draw[->](5+0*\X, 0*\Y) -- node[sloped, above left]{$a$} (-5+1*\X, -1*\Y); 
  \draw[->](5+1*\X, -1*\Y) -- (-5+2*\X, 0*\Y); 

  \draw[->](5+2*\X, 0*\Y) -- ( 2.5*\X, 0*\Y);
\end{tikzpicture} }
    \hspace{20pt}
    \subfloat[Part 2][\label{fig:preventiveproblemA2}Process~A adds a link 
    to Process~D.]
    {
\begin{tikzpicture}[scale=1]
  
  \small
  
  \newcommand\X{210/5pt};
  \newcommand\Y{30pt};
  
  \draw[->] ( -0.5*\X, 0.5*\Y) -- ( -5+0*\X, 0*\Y);
  \draw[->] ( -0.5*\X, 0*\Y) -- ( -5+0*\X, 0*\Y);
  \draw[->] ( -0.5*\X, -0.5*\Y) -- ( -5+0*\X, 0*\Y);  

  \draw[fill=white] (0*\X, 0*\Y) node{\textbf{A}} +(-5pt, -5pt) rectangle +(5pt, 5pt);
  \draw[fill=white] (1*\X, -1*\Y) node{\textbf{B}} +(-5pt, -5pt) rectangle +(5pt, 5pt);
  \draw[fill=white] (2*\X,  0*\Y) node{\textbf{D}} +(-5pt, -5pt) rectangle +(5pt, 5pt);

  \draw[->](5+0*\X, 0*\Y) -- node[sloped, above right]{$a$} (-5+1*\X, -1*\Y); 
  \draw[->](5+1*\X, -1*\Y) -- (-5+2*\X, 0*\Y); 

  \draw[->] (5+0*\X, 0*\Y) -- (-5+2*\X, 0*\Y);

  \draw[->](5+2*\X, 0*\Y) -- ( 2.5*\X, 0*\Y);
\end{tikzpicture} }
    \hspace{20pt}
    \subfloat[Part 3][\label{fig:preventiveproblemB}Process~A broadcasts $a'$.]
    {
\begin{tikzpicture}[scale=1]
  
  \small
  
  \newcommand\X{210/5pt};
  \newcommand\Y{30pt};
  
  \draw[->] ( -0.5*\X, 0.5*\Y) -- ( -5+0*\X, 0*\Y);
  \draw[->] ( -0.5*\X, 0*\Y) -- ( -5+0*\X, 0*\Y);
  \draw[->] ( -0.5*\X, -0.5*\Y) -- ( -5+0*\X, 0*\Y);  

  \draw[fill=white] (0*\X, 0*\Y) node{\textbf{A}} +(-5pt, -5pt) rectangle +(5pt, 5pt);
  \draw[fill=white] (1*\X, -1*\Y) node{\textbf{B}} +(-5pt, -5pt) rectangle +(5pt, 5pt);
  \draw (1*\X, 5-1*\Y) node[anchor=south]{$a$};
  \draw[fill=white] (2*\X,  0*\Y) node{\textbf{D}} +(-5pt, -5pt) rectangle +(5pt, 5pt);

  \draw[->](5+0*\X, 0*\Y) -- node[sloped, above]{$a'$} (-5+1*\X, -1*\Y); 
  \draw[->](5+1*\X, -1*\Y) -- (-5+2*\X, 0*\Y); 
  
  \draw[->] (5+0*\X, 0*\Y) -- node[anchor=south]{$a'$} (-5+2*\X, 0*\Y); 

  \draw[->](5+2*\X, 0*\Y) -- ( 2.5*\X, 0*\Y);
\end{tikzpicture} }
    \hspace{20pt}
    \subfloat[Part 4][\label{fig:preventiveproblemC}Process~D receives and
    delivers $a'$ before $a$. This violates causal order.]
    {
\begin{tikzpicture}[scale=1]
  
  \small
  
  \newcommand\X{210/5pt};
  \newcommand\Y{30pt};
  
  \draw[->] ( -0.5*\X, 0.5*\Y) -- ( -5+0*\X, 0*\Y);
  \draw[->] ( -0.5*\X, 0*\Y) -- ( -5+0*\X, 0*\Y);
  \draw[->] ( -0.5*\X, -0.5*\Y) -- ( -5+0*\X, 0*\Y);  

  \draw[fill=white] (0*\X, 0*\Y) node{\textbf{A}} +(-5pt, -5pt) rectangle +(5pt, 5pt);
  \draw[fill=white] (1*\X, -1*\Y) node{\textbf{B}} +(-5pt, -5pt) rectangle +(5pt, 5pt);
  \draw (1*\X, 5-1*\Y) node[anchor=south]{$a'$};
  \draw[fill=white] (2*\X,  0*\Y) node{\textbf{D}} +(-5pt, -5pt) rectangle +(5pt, 5pt);
  \draw (2*\X, 5-0*\Y) node[anchor=south]{$\pmb{a'}$};

  \draw[->](5+0*\X, 0*\Y) --  (-5+1*\X, -1*\Y); 
  \draw[->](5+1*\X, -1*\Y) -- node[sloped, above]{$b\,\pmb{a}$} (-5+2*\X, 0*\Y); 
  
  \draw[->] (5+0*\X, 0*\Y) -- (-5+2*\X, 0*\Y); 

  \draw[->](5+2*\X, 0*\Y) -- ( 2.5*\X, 0*\Y);

\end{tikzpicture} }
    \hspace{20pt}
    \subfloat[Part 4][\label{fig:preventiveproblemD}Process~D propagates
    the causal order violation.]
    {
\begin{tikzpicture}[scale=1]
  
  \small
  
  \newcommand\X{210/5pt};
  \newcommand\Y{30pt};
  
  \draw[->] ( -0.5*\X, 0.5*\Y) -- ( -5+0*\X, 0*\Y);
  \draw[->] ( -0.5*\X, 0*\Y) -- ( -5+0*\X, 0*\Y);
  \draw[->] ( -0.5*\X, -0.5*\Y) -- ( -5+0*\X, 0*\Y);  

  \draw[fill=white] (0*\X, 0*\Y) node{\textbf{A}} +(-5pt, -5pt) rectangle +(5pt, 5pt);
  \draw[fill=white] (1*\X, -1*\Y) node{\textbf{B}} +(-5pt, -5pt) rectangle +(5pt, 5pt);
  \draw[fill=white] (2*\X,  0*\Y) node{\textbf{D}} +(-5pt, -5pt) rectangle +(5pt, 5pt);
  \draw (2*\X, 5-0*\Y) node[anchor=south]{$b$};

  \draw[->](5+0*\X, 0*\Y) --  (-5+1*\X, -1*\Y); 
  \draw[->](5+1*\X, -1*\Y) -- node[sloped, above]{$a'$} (-5+2*\X, 0*\Y); 
  
  \draw[->] (5+0*\X, 0*\Y) -- (-5+2*\X, 0*\Y); 

  \draw[->](5+2*\X, 0*\Y) -- node[above right]{$\pmb{a\,a'}$} ( 2.5*\X, 0*\Y);

\end{tikzpicture} }
    \caption{\label{fig:preventiveproblem}Causal
      broadcast~\cite{friedman2004causal} may violate causal order in dynamic
      settings.}
  \end{center}
\end{figure*}

Unfortunately, \cite{friedman2004causal} ensures causal order only in static
systems where the membership does not change and no links are added or
removed. These are not practical assumptions.  In practice, processes may join
and leave the system at any time; and processes may reconfigure their
neighborhood at any time~\cite{nedelec2016crate}.
Figure~\ref{fig:preventiveproblem} shows an example of message dissemination in
dynamic settings where causal delivery is violated. In
Figure~\ref{fig:preventiveproblemA}, Process~A broadcasts $a$. It sends $a$ to
all its neighbors. Here, it sends $a$ to Process~B only.  Afterwards, in
Figure~\ref{fig:preventiveproblemA2}, Process~A adds a link to
Process~D. Message $a$ is still traveling. In particular, it did not reach
Process~D yet. In Figure~\ref{fig:preventiveproblemB}, Process~A broadcasts
$a'$. In this example, messages travel faster using the direct link from A to D
than using B as intermediate.  We see in Figure~\ref{fig:preventiveproblemC}
that $a'$ arrives at Process~D before $a$. Figure~\ref{fig:preventiveproblemD}
shows that not only it violates causal delivery but also propagates the
violation to all processes downstream.

The causal broadcast presented in this paper extends~\cite{friedman2004causal}
and solves the causal order violation issue of dynamic systems.
Table~\ref{table:comparison} shows its complexity. Most importantly, message
overhead and delivery execution time remain constant, i.e., our approach is both
lightweight in terms of generated traffic and efficient. The local space
complexity is linear in terms of number of processes that ever broadcast a
message, and awaiting messages.  The local space complexity also comprises a
data structure to ensure causal order. We show in
Algorithm~\ref{algo:boundingbuffer} that the size of this structure can be
bounded even in presence of system failures, such as crashes.  This makes causal
broadcast an affordable and efficient middleware for distributed protocols and
applications even in large and dynamic systems.

The next section describes the proposed causal broadcast. It details its
operation, provides the proofs that it works in both static and dynamic
settings, and shows its complexity analysis.

\section{Causal broadcast\\for large and dynamic systems}
\label{sec:proposal}

In this section, we introduce \CBROADCAST (stands for Preventive Causal
broadcast), a causal broadcast protocol that breaks the scalability barrier for
large and dynamic systems.  Our approach is preventive: instead of repairing
causal order violations or reordering received messages, it simply makes sure
that messages never arrive out of causal order. Processes can immediately
deliver messages upon receipt. This not only removes most of control information
piggybacked in broadcast messages, but also leads to constant delivery execution
time. Protocols and applications can finally afford causal broadcast in large
and dynamic systems without loss of efficiency.

\subsection{Model}

A distributed system comprise processes. Processes can communicate with each
other using messages. They may not have full knowledge of the membership, for
maintenance is too costly in large and dynamic systems. Instead, processes build
overlay networks with local partial view the size of which is generally much
smaller than the actual size of the
system~\cite{bertier-d2ht,jelasity2007gossip,jelasity2009tman}. Overlay networks
can be built on top of other overlay networks.  For the rest of this paper, we
will speak of distributed systems, overlay networks, or networks indifferently.

\begin{definition}[Overlay network]
  An overlay network $G$ is a pair $\langle P,\, E\rangle$ where $P$ is a set of
  processes, and $E$ is a set of links $E: P\times P$. An overlay network is
  static when $P$ and $E$ are immutable, otherwise it
  is dynamic.
\end{definition}

\begin{definition}[Process]
  A process runs a set of instructions sequentially and communicates
  with other processes using message passing. \\
  A process' neighborhood is the set of links departing from it. \\
  A process~A can send messages to another process~B in its neighborhood:
  $s_{AB}(m)$; receive a message from another process~C that has Process~A as
  neighbor:
  $r_{AC}(m)$. \\
  A process is faulty if it crashes, otherwise it is correct. We do not consider
  byzantine processes.
\end{definition}

\begin{definition}[Unpartitioned network]
  A network is unpartitioned if and only if for any pair of correct processes,
  there exist a path -- a link or a sequence of links -- of correct processes
  between them. We only consider unpartitioned overlay networks.
\end{definition}

Causal broadcast is a communication primitive that relies on reliable broadcast
to send messages to all processes in the system.

\begin{definition}[Uniform reliable broadcast]
  When a process~A broadcasts a message $b_A(m)$, each correct process~B in the
  network eventually receives it $r_B(m)$ and delivers it $d_B(m)$.
  Uniform reliable broadcast guarantees 3 properties: \\
  \textbf{Validity:} If a correct process broadcasts a message, then it
  eventually delivers it. \\
  \textbf{Uniform Agreement:} If a process -- correct or not -- delivers a
  message, then all correct processes eventually deliver it. \\
  \textbf{Uniform Integrity:} A process delivers a message at most once, and
  only if it was previously broadcast.
\end{definition}

\begin{algorithm}
  \SetKwProg{Function}{function}{}{}
\SetKwProg{Upon}{upon}{}{}
\SetKwProg{Initially}{INITIALLY:}{}{}
\SetKwProg{Dissemination}{DISSEMINATION:}{}{}

\small

\DontPrintSemicolon
\LinesNumbered

\Initially {} {
  $Q$ \tcp*{Set of processes, $p$'s neighborhood}
  $received \leftarrow \varnothing$ \tcp*{Set of received messages}
}

\BlankLine

\Dissemination{}{
  
  \Function{$\textup{R-broadcast}(m)$} { 
    $received \leftarrow received \cup m$ \;
    \lForEach {$q \in Q$} {\textup{sendTo}($q,\, m$) \tcp*[f]{broadcast}}
    \textup{R-deliver}($m$) \; 
  }

  \BlankLine
  
  \Upon{$\textup{receive}(m)$}{
    \If {$m \not \in received$} {
      $received \leftarrow received \cup m$ \;
      \lForEach {$q \in Q$} {\textup{sendTo}($q,\, m$) \tcp*[f]{forward}}
      \textup{R-deliver}($m$) \; 
    }
  }
  
}

   \caption{\label{algo:reliablebroadcast}R-broadcast at Process $p$.}
\end{algorithm}

Algorithm~\ref{algo:reliablebroadcast} shows the instructions of a uniform
reliable broadcast. It uses a structure that keeps track of received messages in
order to deliver them at most once.  Since processes may not have full
membership knowledge, processes must forward broadcast messages. Since the
network does not have partitions, processes either receive the message directly
from the broadcaster or transitively. Thus, all correct processes eventually
deliver all messages exactly once. R-broadcast ensures validity, uniform
agreement, and uniform integrity.

Causal broadcast is a reliable broadcast that also ensures a specific ordering
among message deliveries.  To define a delivery order among messages, we define
time in a logical sense using Lamport's definition~\cite{lamport1978time}.

\begin{definition}[Happen before~\cite{lamport1978time}]
  Happen before is a transitive, irreflexive, and antisymmetric relation
  $\rightarrow$ that defines a strict partial orders of events.  The sending of
  a message always precedes its receipt.
\end{definition}

To order messages broadcast by every processes, we define causal order.

\begin{definition}[Causal order]
  The delivery order of messages follows the happen before relationships of the
  corresponding broadcasts. $\forall A,\,B,\,C,\,
  b_A(m) \rightarrow b_B(m') \implies d_C(m) \rightarrow d_C(m')$
\end{definition}

\begin{definition}[Causal broadcast]
  Causal broadcast is a uniform reliable broadcast ensuring causal order.
\end{definition}

\begin{theorem}[\label{theo:flooding}Constraint flooding in deterministic
  overlay networks is causal~\cite{friedman2004causal}]
  In static networks, a broadcast protocol is causal if it uses FIFO links,
  forwards all broadcast messages exactly once, and uses all its outgoing links.
\end{theorem}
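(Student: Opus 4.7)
The plan is to strengthen the conclusion to a statement about \emph{receipt} order, which under this protocol coincides with \emph{delivery} order because Algorithm~\ref{algo:reliablebroadcast} delivers a message the first time it is received. Concretely, I would prove the following invariant: for every event $e$ occurring at a process $P$ such that $b_A(m) \rightarrow e$, either $e = b_A(m)$, or $P$ has already delivered $m$ before $e$ and has already executed $s_{PQ}(m)$ for every neighbor $Q$ strictly before $e$. Instantiating this invariant with $e := r_C(m')$ for an arbitrary correct process $C$ that eventually delivers $m'$ (such a receive event exists because the protocol is a reliable broadcast and it is causally preceded by $b_B(m')$, hence by $b_A(m)$) immediately yields $d_C(m)$ before $d_C(m')$, which is exactly causal order.

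The invariant is proved by induction on the length of a chain of events witnessing $b_A(m) \rightarrow e$. The base case $e = b_A(m)$ is vacuous. For the inductive step, inspect the last edge $e' \rightarrow e$ of the chain. If $e'$ and $e$ occur at the same process $P$ with $e'$ locally preceding $e$, the induction hypothesis applied to $e'$ transfers immediately. Otherwise $e' = s_{QP}(\mu)$ is a send and $e = r_{PQ}(\mu)$ its matching receive; by the induction hypothesis applied to $e'$, $Q$ has already executed $s_{QP}(m)$ before $e'$. FIFO delivery on the link $Q \to P$ then forces $P$ to receive $m$ from $Q$ before receiving $\mu$, so $P$ has received and therefore delivered $m$ before $e$. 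Since $P$'s delivery routine forwards $m$ along every outgoing link before returning control, $P$ has also executed $s_{PQ'}(m)$ for every neighbor $Q'$ before $e$, re-establishing the invariant.

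The main obstacle will be tying the three hypotheses of Theorem~\ref{theo:flooding} precisely to the send--receive case of the induction. FIFO preserves order only on each particular link, so I must argue that $Q$ actually emitted $s_{QP}(m)$ toward the specific neighbor $P$ appearing on the chain; this is where ``uses all its outgoing links'' together with ``forwards every message exactly once'' enters, guaranteeing that $Q$ did send $m$ on every link (including the one to $P$) and did so atomically with its own delivery so that $s_{QP}(m)$ strictly precedes every later send from $Q$ to $P$. Removing any one of the three hypotheses breaks the inductive step, so the statement is tight in that sense; once the invariant is established, the conclusion of the theorem follows in one line.
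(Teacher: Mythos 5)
The paper offers no proof of Theorem~\ref{theo:flooding}: it is imported verbatim from~\cite{friedman2004causal} and used as a black box (the text that follows it merely instantiates it for Algorithm~\ref{algo:reliablebroadcast}), so there is nothing in the paper to compare your argument against. What you have written is essentially the standard proof of the cited result, and the skeleton --- induction along a happen-before chain, FIFO order invoked on the final send/receive edge, and the deliver-then-forward-on-all-links discipline re-establishing the invariant --- is the right one.

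One repair is needed before the argument goes through literally: your invariant, with \emph{strictly before}, is false at the events where $m$ itself first arrives. If $e = r_{PA}(m)$ is the receipt of $m$ by a neighbor $P$ of $A$, then $b_A(m) \rightarrow e$ and $e \neq b_A(m)$, yet $P$ has not delivered $m$ strictly before $e$; it delivers it \emph{at} $e$. The same slip surfaces inside the induction: when the last edge of the chain is $b_A(m) \rightarrow r_{PA}(m)$, the induction hypothesis for $e' = b_A(m)$ is the vacuous branch, so you cannot conclude that some $s_{QP}(m)$ precedes $e'$, and your send/receive case also tacitly assumes $\mu \neq m$. The fix is to weaken the invariant to ``$P$ has delivered $m$ and executed $s_{PQ'}(m)$ for every neighbor $Q'$ no later than the completion of the protocol step triggered by $e$,'' treating receive--deliver--forward (and broadcast--send--deliver) as a single atomic step; the case $\mu = m$ is then absorbed. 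The final instantiation $e := r_C(m')$ still yields the \emph{strict} precedence $d_C(m) \rightarrow d_C(m')$ because $m' \neq m$ and a single receive event cannot carry both messages. With that adjustment the proof is complete and correctly isolates where each of the three hypotheses is used.
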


From Theorem~\ref{theo:flooding}, reliable broadcast from
Algorithm~\ref{algo:reliablebroadcast} is causal if communication links employed
to communicate with neighbors in $Q$ are FIFO. This holds only for static
networks where $Q$ is immutable. In practice, processes can join, leave, add or
remove links to neighbors from $Q$ at any time.

\begin{lemma}[R-broadcast is causal in dynamic systems subject to
  removals\label{lem:removals}]
  R-broadcast using FIFO links is a causal broadcast in dynamic systems where
  processes can leave the system or links can be removed.
\end{lemma}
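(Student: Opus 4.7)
The plan is to prove the monotonicity statement ``if $b(m) \rightarrow b(m')$ then every correct process that R-delivers $m'$ has already R-delivered $m$''; combined with the reliability guarantees of R-broadcast, this is exactly causal order. I would establish it via a pair of nested inductions that isolate the removal-only dynamics as the key hypothesis.

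First I would prove a helper claim: if any event $e$ at a process $X$ satisfies $b_A(m) \rightarrow e$, then $X$ has R-delivered $m$ at or before $e$. The induction is on the length of the happen-before chain, and the only non-trivial step is when the chain's last edge is a send--receive pair in which some process $Y$ sends a message $m'' \neq m$ to $X$. Applying the induction hypothesis to the shorter chain gives $Y$ R-delivering $m$ strictly before sending $m''$, so at that earlier moment $Y$ sent $m$ on every link in its then-current neighborhood $Q$ per Algorithm~\ref{algo:reliablebroadcast}. The removal-only hypothesis now bites: since link $(Y, X)$ existed at the later moment when $Y$ sent $m''$ to $X$, and $E$ can only shrink over time, $(Y, X)$ also existed at the earlier moment; hence $Y$ sent $m$ on $(Y, X)$ strictly before $m''$. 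FIFO on that link delivers $m$ to $X$ first, and the duplicate-suppression test of Algorithm~\ref{algo:reliablebroadcast} makes the first receipt trigger R-delivery of $m$ strictly before $e$.

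Applying this helper to $e = b_B(m')$ settles R-delivery at the broadcaster $B$ itself. To lift the conclusion to an arbitrary correct process $X$ that R-delivers $m'$, I would run a second induction on the length of the dissemination path that actually carried $m'$ to $X$: at each hop from some $Y$ to $X$, the predecessor $Y$ has R-delivered $m$ before $m'$ by the induction hypothesis, and the same FIFO-plus-removal-monotonicity argument as in the helper gives $X$ the receipts in the right order.

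The main obstacle, and the reason the lemma is restricted to removals, is the ``link was already there'' step: it collapses the moment link or process additions are permitted, as Figure~\ref{fig:preventiveproblem} illustrates --- a freshly inserted link transports $m'$ with no FIFO history that could have carried $m$. Under removal-only dynamics the predicate ``$(Y, X) \in E$'' is monotonically decreasing in time, which turns what would otherwise be the crux of the proof into a one-line remark, leaving essentially Theorem~\ref{theo:flooding}'s FIFO-flooding argument applied hop by hop. Crashes of intermediate processes between the two sends are harmless, because the FIFO ordering on $(Y, X)$ is settled at the moment the sends were issued, and reliability of R-broadcast supplies any lost copies via alternative paths on which the same argument reapplies.
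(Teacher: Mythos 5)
Your proof is correct, and it is essentially a rigorous elaboration of an argument the paper only sketches. The paper's own proof is three sentences long: it reduces process departures to link removals, invokes the no-partition assumption for reliability, and then simply \emph{asserts} that ``removing a link or a process does not reorder causally related messages,'' concluding that delivery proceeds as in the static case of Theorem~\ref{theo:flooding}. You instead prove that assertion: your observation that under removal-only dynamics the predicate $(Y,X)\in E$ is monotonically decreasing in time --- so any link available for a later send of $m''$ was already in $Q$ when $m$ was flooded, and hence carried $m$ first --- is precisely the justification the paper leaves implicit, and it is the right one, since it is exactly the step that fails under additions (Figure~\ref{fig:preventiveproblem}). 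Your two nested inductions (on the happen-before chain to handle the broadcaster, then on the first-receipt dissemination path of $m'$ to reach an arbitrary correct process) give the formal skeleton that the paper compresses into ``as in static systems.'' The one point where you match rather than exceed the paper's level of rigor is the fate of copies in transit on a link that is subsequently removed; both you and the paper delegate this to the no-partition/reliability assumption, which is consistent with the model as stated.
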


\begin{proof}
  Removing a process from the network and removing all the incoming and outgoing
  links of this process is equivalent. Since we assume that removals do not
  create network partitions\footnote{It may create partitions infringing the
    uniform agreement property. Network partitioning constitutes an orthogonal
    problem that we do not address in this paper.}, all correct processes
  eventually receive all broadcast messages. In addition, removing a link or a
  process does not reorder causally related messages. Hence, each process
  receives and delivers each broadcast message in causal order as in static
  systems.
\end{proof}

Link removals and process departures do not endanger broadcast properties.
However, Figure~\ref{fig:preventiveproblem} shows that adding links may lead to
causal order violations. The next section describes \CBROADCAST, a causal
broadcast that handles all dynamicity.

\subsection{Causal order in dynamic systems}

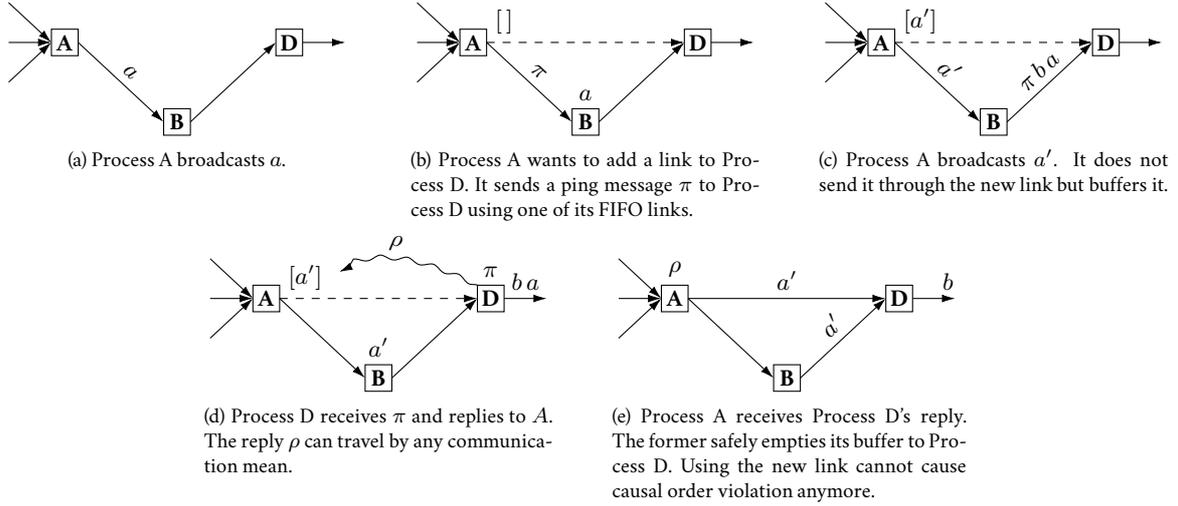
\begin{figure*}
  \begin{center}
    \subfloat[Part A][\label{fig:preventivesolveA}Process~A broadcasts $a$.]
    {
\begin{tikzpicture}[scale=1]
  
  \small
  
  \newcommand\X{210/5pt};
  \newcommand\Y{30pt};
  
  \draw[->] ( -0.5*\X, 0.5*\Y) -- ( -5+0*\X, 0*\Y);
  \draw[->] ( -0.5*\X, 0*\Y) -- ( -5+0*\X, 0*\Y);
  \draw[->] ( -0.5*\X, -0.5*\Y) -- ( -5+0*\X, 0*\Y);  

  \draw[fill=white] (0*\X, 0*\Y) node{\textbf{A}} +(-5pt, -5pt) rectangle +(5pt, 5pt);
  \draw[fill=white] (1*\X, -1*\Y) node{\textbf{B}} +(-5pt, -5pt) rectangle +(5pt, 5pt);
  \draw[fill=white] (2*\X,  0*\Y) node{\textbf{D}} +(-5pt, -5pt) rectangle +(5pt, 5pt);

  \draw[->](5+0*\X, 0*\Y) -- node[sloped, above]{$a$} (-5+1*\X, -1*\Y); 
  \draw[->](5+1*\X, -1*\Y) -- (-5+2*\X, 0*\Y); 

  \draw[->](5+2*\X, 0*\Y) -- ( 2.5*\X, 0*\Y);
\end{tikzpicture} }
    \hspace{20pt}
    \subfloat[Part B][\label{fig:preventivesolveB}Process~A wants
    to add a link to Process~D. 
    It sends a ping message $\pi$ to Process~D using one of its FIFO links.]
    {
\begin{tikzpicture}[scale=1]
  
  \small
  
  \newcommand\X{210/5pt};
  \newcommand\Y{30pt};
  
  \draw[->] ( -0.5*\X, 0.5*\Y) -- ( -5+0*\X, 0*\Y);
  \draw[->] ( -0.5*\X, 0*\Y) -- ( -5+0*\X, 0*\Y);
  \draw[->] ( -0.5*\X, -0.5*\Y) -- ( -5+0*\X, 0*\Y);  

  \draw[fill=white] (0*\X, 0*\Y) node{\textbf{A}} +(-5pt, -5pt) rectangle +(5pt, 5pt);
  \draw[fill=white] (1*\X, -1*\Y) node{\textbf{B}} +(-5pt, -5pt) rectangle +(5pt, 5pt);
  \draw (1*\X, 5-1*\Y) node[anchor=south]{$a$};
  \draw[fill=white] (2*\X,  0*\Y) node{\textbf{D}} +(-5pt, -5pt) rectangle +(5pt, 5pt);

  \draw[->](5+0*\X, 0*\Y) -- node[sloped, above]{$\pi$} (-5+1*\X, -1*\Y); 
  \draw[->](5+1*\X, -1*\Y) -- (-5+2*\X, 0*\Y); 
  
  \draw[->, dashed] (5+0*\X, 0*\Y) node[anchor=south west]{$[\,]$} -- (-5+2*\X, 0*\Y); 

  \draw[->](5+2*\X, 0*\Y) -- ( 2.5*\X, 0*\Y);
\end{tikzpicture} }
    \hspace{20pt}
    \subfloat[Part C][\label{fig:preventivesolveC}Process~A broadcasts $a'$.
    It does not send it through the new link but buffers it.]
    {
\begin{tikzpicture}[scale=1]
  
  \small
  
  \newcommand\X{210/5pt};
  \newcommand\Y{30pt};
  
  \draw[->] ( -0.5*\X, 0.5*\Y) -- ( -5+0*\X, 0*\Y);
  \draw[->] ( -0.5*\X, 0*\Y) -- ( -5+0*\X, 0*\Y);
  \draw[->] ( -0.5*\X, -0.5*\Y) -- ( -5+0*\X, 0*\Y);  

  \draw[fill=white] (0*\X, 0*\Y) node{\textbf{A}} +(-5pt, -5pt) rectangle +(5pt, 5pt);
  \draw[fill=white] (1*\X, -1*\Y) node{\textbf{B}} +(-5pt, -5pt) rectangle +(5pt, 5pt);
  \draw[fill=white] (2*\X,  0*\Y) node{\textbf{D}} +(-5pt, -5pt) rectangle +(5pt, 5pt);

  \draw[->](5+0*\X, 0*\Y) -- node[sloped, above]{$a'$} (-5+1*\X, -1*\Y); 
  \draw[->](5+1*\X, -1*\Y) -- node[sloped, above]{$\pi\,b\,a$} (-5+2*\X, 0*\Y); 
  
  \draw[->, dashed] (5+0*\X, 0*\Y) node[anchor=south west]{$[a']$} --  (-5+2*\X, 0*\Y); 

  \draw[->](5+2*\X, 0*\Y) -- ( 2.5*\X, 0*\Y);
\end{tikzpicture} }
    \hspace{20pt}
    \subfloat[Part D][\label{fig:preventivesolveD}Process~D receives
    $\pi$ and replies to $A$.
    The reply $\rho$ can travel by any communication
    mean.]
    {
\begin{tikzpicture}[scale=1]
  
  \small
  
  \newcommand\X{210/5pt};
  \newcommand\Y{30pt};
  
  \draw[->] ( -0.5*\X, 0.5*\Y) -- ( -5+0*\X, 0*\Y);
  \draw[->] ( -0.5*\X, 0*\Y) -- ( -5+0*\X, 0*\Y);
  \draw[->] ( -0.5*\X, -0.5*\Y) -- ( -5+0*\X, 0*\Y);  

  \draw[fill=white] (0*\X, 0*\Y) node{\textbf{A}} +(-5pt, -5pt) rectangle +(5pt, 5pt);
  \draw[fill=white] (1*\X, -1*\Y) node{\textbf{B}} +(-5pt, -5pt) rectangle +(5pt, 5pt);
  \draw (1*\X, 5-1*\Y) node[anchor=south]{$a'$};
  \draw[fill=white] (2*\X,  0*\Y) node{\textbf{D}} +(-5pt, -5pt) rectangle +(5pt, 5pt);
  \draw (2*\X, 5-0*\Y) node[anchor=south]{$\pi$};

  \draw[->](5+0*\X, 0*\Y) -- (-5+1*\X, -1*\Y); 
  \draw[->](5+1*\X, -1*\Y) -- (-5+2*\X, 0*\Y); 
  
  \draw[->, dashed] (5+0*\X, 0*\Y) node[anchor=south west]{$[a']$} --  (-5+2*\X, 0*\Y); 

  \draw[->, decorate, decoration={snake, amplitude=0.3mm}](-5+2*\X, 5+0*\Y)
  to[out=180-25, in=25] node[sloped, above left]{$\rho$}(0.65*\X, 10+0*\Y); 

  \draw[->](5+2*\X, 0*\Y) -- node[anchor=south]{$b\,a$}( 2.5*\X, 0*\Y);
\end{tikzpicture} }
    \hspace{20pt}
    \subfloat[Part E][\label{fig:preventivesolveE}Process~A receives
    Process~D's reply. 
    The former safely empties its buffer to Process~D. 
    Using the new link cannot cause causal order violation anymore.]
    {
\begin{tikzpicture}[scale=1]
  
  \small
  
  \newcommand\X{210/5pt};
  \newcommand\Y{30pt};
  
  \draw[->] ( -0.5*\X, 0.5*\Y) -- ( -5+0*\X, 0*\Y);
  \draw[->] ( -0.5*\X, 0*\Y) -- ( -5+0*\X, 0*\Y);
  \draw[->] ( -0.5*\X, -0.5*\Y) -- ( -5+0*\X, 0*\Y);  

  \draw[fill=white] (0*\X, 0*\Y) node{\textbf{A}} +(-5pt, -5pt) rectangle +(5pt, 5pt);
  \draw (0*\X, 5-0*\Y) node[anchor=south]{$\rho$};
  \draw[fill=white] (1*\X, -1*\Y) node{\textbf{B}} +(-5pt, -5pt) rectangle +(5pt, 5pt);
  \draw[fill=white] (2*\X,  0*\Y) node{\textbf{D}} +(-5pt, -5pt) rectangle +(5pt, 5pt);

  \draw[->](5+0*\X, 0*\Y) -- (-5+1*\X, -1*\Y); 
  \draw[->](5+1*\X, -1*\Y) -- node[sloped, above]{$a'$} (-5+2*\X, 0*\Y); 
  
  \draw[->] (5+0*\X, 0*\Y)  -- node[anchor=south]{$a'$} (-5+2*\X, 0*\Y); 

  \draw[->](5+2*\X, 0*\Y) -- node[anchor=south west]{$b$}( 2.5*\X, 0*\Y);
\end{tikzpicture} }
    \caption{\label{fig:preventivesolve}\CBROADCAST does not violate causal
      order in dynamic settings.}
  \end{center}
\end{figure*}

\CBROADCAST stands for Preventive Causal broadcast. It prevents causal order
violations by forbidding the usage of new links until proven safe. It
constitutes a powerful yet simple extension
of~\cite{friedman2004causal}. Table~\ref{table:comparison} shows that it
preserves both constant message overhead and constant delivery execution time in
dynamic settings.

Figure~\ref{fig:preventiveproblem} shows that adding links may infringe the
causal order property of causal broadcast.  New links may act as shortcut for
new messages: new messages that travel through new links may arrive before
preceding messages that took longer paths. To prevent this behavior, we define
the safety of a link. \CBROADCAST uses all and only safe links to disseminate
messages.

\begin{definition}[\label{def:safe}Safe link]
  A link from Process~A to Process~B is safe if and only if Process~B received
  or will receive all messages delivered by Process~A before receiving any
  message that Process~A will deliver:
  $safe_{AB} \equiv \forall m,\, m',\, d_A(m) \rightarrow s_{AB}(m') \implies
  r_B(m) \rightarrow r_{BA}(m')$
\end{definition}

Added links start unsafe. In Figure~\ref{fig:preventiveproblem}, Process~A uses
the link to broadcast $a'$ while it is unsafe: Process~B did not receive $a$
yet, and there was no guaranty that Process~B would receive $a$ before receiving
$a'$ from the new link. In this example, the worst happens and Process~B
receives then delivers $a'$ before $a$ which violates causal order.

The challenge is to make unsafe links safe using local knowledge only. A
straightforward mean for Process~A to achieve this consists in sending all its
delivered messages to Process~B using this unsafe link. This guarantees that any
message delivered by A will be received by B before A starts using the new --
now safe -- link for causal broadcast. However, this is costly both in local
space and generated traffic. Performing an anti-entropy round to extract missing
messages would also be overcostly in terms of generated traffic for it would
require sending the vector of received messages~\cite{demers1987epidemic}.
Instead, Process~A avoid sending most of messages by initiating a ping phase
to Process~B. 

\begin{definition}[Ping phase]
  Ping phase starts when Process~A pings Process~B. Ping messages $\pi$ travel
  using safe links. When Process~B receives this ping, it replies to
  Process~A. Replies $\rho$ travel using any communication mean. Ping phase ends
  when Process~A receives the reply of Process~B.
\end{definition}

\begin{lemma}[\label{lemma:ping}Ping phases acknowledge broadcast receipts]
  A ping phase from Process~A to Process~B acknowledges the receipt by B of all
  messages delivered by Process~A before this ping phase:
  $\forall m,\, d_A(m) \rightarrow s_A(\pi_{AB}) \wedge r_A(\rho_{AB}) \implies
  r_B(m)$
\end{lemma}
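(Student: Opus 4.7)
The plan is to trace the ping $\pi_{AB}$ along the safe FIFO path it follows from A to B, and to argue by induction on the hops that every message $m$ with $d_A(m)\rightarrow s_A(\pi_{AB})$ arrives at each intermediate process no later than the ping. The ingredients I would combine are: the forwarding rule of Algorithm~\ref{algo:reliablebroadcast} (every newly received or delivered message is immediately pushed on every safe outgoing link, since \CBROADCAST uses exactly the safe links to disseminate); FIFO on each individual link; and the fact that, by the definition of ping phase, $\pi_{AB}$ only travels over safe links, so a FIFO path from A to B exists throughout.

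First I would use the premise $r_A(\rho_{AB})$ merely to obtain $r_B(\pi_{AB})$: the reply $\rho$ is, by construction, only emitted once B has received the ping, so $r_B(\pi_{AB}) \rightarrow s_B(\rho_{AB}) \rightarrow r_A(\rho_{AB})$. It therefore suffices to prove $r_B(m)\rightarrow r_B(\pi_{AB})$, from which $r_B(m)$ is immediate. I would then fix a path $A = X_0 \to X_1 \to \cdots \to X_k = B$ of safe FIFO links carrying the ping. For the base case, A delivers $m$ before $s_A(\pi_{AB})$, hence enqueues $m$ on the link $A\to X_1$ strictly before enqueuing $\pi_{AB}$; FIFO on $A\to X_1$ yields $r_{X_1}(m)\rightarrow r_{X_1}(\pi_{AB})$.

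The inductive step is the heart of the argument. Assuming $r_{X_i}(m)\rightarrow r_{X_i}(\pi_{AB})$, I would note that at the latest upon $r_{X_i}(m)$, process $X_i$ has placed $m$ on its outgoing link to $X_{i+1}$, either at the very moment of this first receipt (Algorithm~\ref{algo:reliablebroadcast}) or on some earlier receipt through another safe path; in either case the enqueue of $m$ on $X_i\to X_{i+1}$ strictly precedes the enqueue of $\pi_{AB}$ on that same link. FIFO then propagates the ordering: $r_{X_{i+1}}(m)\rightarrow r_{X_{i+1}}(\pi_{AB})$. Iterating to $X_k=B$ closes the proof.

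The main obstacle I expect is pinning down the inductive step precisely when $X_i$ may already have received $m$ via a different route, because the $received$ set in Algorithm~\ref{algo:reliablebroadcast} suppresses any second forwarding. I need to state carefully that the unique forwarding of $m$ by $X_i$ happens at its \emph{first} receipt, and that this first receipt is, by the induction hypothesis and the chain of FIFO links built so far, necessarily earlier than $X_i$'s own forwarding of $\pi_{AB}$ on the same outgoing link. A secondary subtlety worth flagging is that Definition~\ref{def:safe} quantifies over all messages A \emph{will} deliver, whereas here only past deliveries matter; the forward-looking clause of safety is not used in this lemma but will be invoked later when showing that the newly pinged link itself becomes safe once $\rho_{AB}$ is received.
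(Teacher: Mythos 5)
Your proof is correct and follows essentially the same route as the paper's: the paper likewise reduces the claim to B having received the ping (via the reply $\rho_{AB}$) and then argues, hop by hop along the safe FIFO path, that intermediate processes forward messages in their receipt order, so the ping --- injected by A after all its delivered messages --- can never overtake them. Your explicit induction on the ping's path, together with the observation that forwarding happens exactly at first receipt, is simply a more careful rendering of the paper's informal ``processes either already forwarded them or will forward them in their receipt order'' step.
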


\begin{proof}
  Suppose a process~A initiates a ping phase to a process B. Suppose series of
  messages delivered by Process~A. Process~A sent these messages exactly once
  using all its outgoing safe links. Processes that will receive these message
  either already forwarded them or will forward them in their receipt
  order. Since Process~A's ping travels using safe links after these messages,
  when Process~B receives the ping, it already received all messages delivered
  by Process~A. Process~A receives Process~B's reply after Process~B received
  the ping. Consequently, when Process~A receives Process~B's reply, Process~B
  received all messages delivered by Process~A before the start of this ping
  phase.
\end{proof}

Upon receipt of Process~B's reply, Process~A has the guaranty that Process~B
received all its delivered messages preceding the ping phase. This is not
sufficient, for ping phases take time. Messages delivered during ping phase by
Process~A may not be received by Process~B yet. To fill this gap, Process~A
sends to Process~B the messages it buffered during ping phase.

\begin{definition}[Buffering]
  Process~A records in a buffer $\mathcal{B}$ all its delivered messages during
  a ping phase to Process~B.
  $\forall m,\, s_A(\pi_{AB}) \rightarrow d_A(m) \wedge d_A(m)\rightarrow
  r_A(\rho_{AB}) \Leftrightarrow m \in \mathcal{B} $
\end{definition}

\begin{lemma}[Ping phase and buffering makes safe links]
  Process~A makes an unsafe link to Process~B safe by completing a ping phase to
  Process~B then finalizing it by sending all delivered messages buffered during
  ping phase using the new link.
\end{lemma}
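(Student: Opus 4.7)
The plan is to unfold the safety definition
$safe_{AB} \equiv \forall m, m',\ d_A(m)\rightarrow s_{AB}(m') \implies r_B(m)\rightarrow r_{BA}(m')$
and show that every message $m$ that A has delivered before transmitting some $m'$ on the new link has already been (or will be) received by B before that $m'$. The proof proceeds by a case split on when A delivered $m$ relative to the ping phase, combined with the FIFO property of the new link.

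\textbf{Case analysis on $m$.} Fix $m'$ with $d_A(m)\rightarrow s_{AB}(m')$. By construction A only fires $s_{AB}(m')$ after the ping phase is complete and after the buffered content has been flushed through the new link. I would distinguish three regimes for the delivery of $m$:
\emph{(i)} $d_A(m)$ precedes $s_A(\pi_{AB})$; \emph{(ii)} $d_A(m)$ happens strictly between $s_A(\pi_{AB})$ and $r_A(\rho_{AB})$; \emph{(iii)} $d_A(m)$ happens after $r_A(\rho_{AB})$ but before $s_{AB}(m')$. Every message compatible with the hypothesis falls in exactly one regime.

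\textbf{Discharging each case.} In \emph{(i)}, Lemma~\ref{lemma:ping} directly yields $r_B(m)$ before $r_A(\rho_{AB})$, hence before A ever emits anything on the new link, so \emph{a fortiori} before $r_{BA}(m')$. In \emph{(ii)}, the Buffering definition guarantees $m\in \mathcal{B}$; after receiving $\rho_{AB}$, A drains $\mathcal{B}$ through the new link, before using it for any other purpose. Hence $s_{AB}(m)$ precedes $s_{AB}(m')$ on the same channel, and FIFO delivers $m$ at B before $m'$. In \emph{(iii)}, the link is already treated as a standard outgoing link and A forwards each delivered message on it immediately (the standard flooding behaviour of Algorithm~\ref{algo:reliablebroadcast}); again $s_{AB}(m)$ precedes $s_{AB}(m')$ and FIFO gives the desired ordering.

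\textbf{Expected obstacle.} The delicate step is case~\emph{(ii)}: one has to be precise about the interleaving between the flush and the normal broadcast activity of A, so that no message delivered during the ping phase escapes the buffer on one hand, and no post-flush message slips onto the link before the buffer has been entirely drained on the other. This is essentially an argument about the algorithm's scheduling: the flush must be atomic relative to the transition that turns the link from unsafe to safe, so that the concatenation ``buffered messages, then normal traffic'' produced on the link is exactly the sequence of deliveries at A in order. Once that invariant is spelled out, FIFO on the link plus Lemma~\ref{lemma:ping} close all three cases and the link satisfies Definition~\ref{def:safe}.
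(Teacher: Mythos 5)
Your proof is correct and follows essentially the same route as the paper's: messages delivered before the ping phase are handled by Lemma~\ref{lemma:ping}, messages delivered during it are covered by the buffer being flushed first over the FIFO link, and later messages follow by ordinary FIFO forwarding. Your explicit third case and the remark on the atomicity of the flush relative to the unsafe-to-safe transition are refinements the paper leaves implicit, but the underlying argument is identical.
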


\begin{proof}
  Suppose series of messages $m_1 \ldots m_i \ldots m_j$ delivered by a
  process~A. Suppose Process~A initiated a ping phase to a process B after
  delivering $m_i$. Suppose Process~A receives Process~B's reply after $m_j$.
  We must show that when Process~A delivers a message after $m_k$, Process~B
  received or will receive $m_1 \ldots m_j$ before. \\
  From Lemma~\ref{lemma:ping}, when Process~A receives Process~B's reply,
  Process~B received $m_1 \ldots m_i$. \\
  Since Process~A buffered all messages delivered since the beginning of the
  ping phase, the buffer contains $m_{i+1} \ldots m_j$ when the ping phase ends.
  Since links are FIFO, sending messages of this buffer using the new link
  guarantees that Process~B will receive them before receiving any $m_k$
  delivered after $m_j$. The link from Process~A to Process~B became safe.
\end{proof}

\begin{lemma}[\CBROADCAST is causal in dynamic systems subject to
  additions\label{lem:additions}]
  In dynamic systems where processes can join or add links, broadcasting using
  all and only safe FIFO links ensures causal order. Without partition, the
  broadcast is causal.
\end{lemma}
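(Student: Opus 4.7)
The plan is to reduce to Theorem~\ref{theo:flooding} by showing that the safety invariant is the natural generalization of FIFO required once the overlay may grow. Since by the previous lemma every link is safe at the moment it is used for broadcast, and by Lemma~\ref{lem:removals} removals do not disturb causal order, it suffices to establish: for every pair of broadcasts with $b_A(m)\rightarrow b_B(m')$ and every correct process $C$ that delivers both, $d_C(m)\rightarrow d_C(m')$. Reliability and exactly-once delivery then follow from Algorithm~\ref{algo:reliablebroadcast} combined with the unpartitioned-network assumption, upgrading the ordering guarantee to full causal broadcast.

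The argument will proceed by induction on the length of a happens-before chain connecting $b_A(m)$ to $b_B(m')$. The base case is same-process precedence $b_X(m)\rightarrow b_X(m')$: process $X$ delivers $m$ before sending $m'$ on any outgoing link, so Definition~\ref{def:safe} applied to each safe link $XY$ used for broadcast immediately yields $r_Y(m)\rightarrow r_{YX}(m')$; combined with the forward-then-deliver rule of Algorithm~\ref{algo:reliablebroadcast}, this gives $d_Y(m)\rightarrow d_Y(m')$. The general step factors through an intermediate delivery $b_A(m)\rightarrow d_X(m)\rightarrow b_X(m')$, reducing the ordering at $X$ to the base case and propagating it to every $C$ along any safe path, which exists because the network is unpartitioned.

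The main obstacle is the boundary case of a message sent on a \emph{newly} safe link whose buffered drain is still in progress: Definition~\ref{def:safe} asserts $r_Y(m)\rightarrow r_{YX}(m')$ without pinning down the path by which $m$ reaches $Y$, so I need to separate messages delivered by $X$ before the ping (which reach $Y$ via pre-existing safe paths during the ping phase, by Lemma~\ref{lemma:ping}) from those delivered during the ping (which $X$ emits in FIFO order from its buffer on the new link before any post-drain broadcast). The cleanest phrasing is probably to cast ``every link in use is safe and FIFO'' as an inductive invariant maintained by the protocol across link additions, and then to instantiate the constraint-flooding argument of Theorem~\ref{theo:flooding} over the monotonically evolving set of safe links.
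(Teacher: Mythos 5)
Your proposal is correct and follows essentially the same route as the paper: inherit validity, uniform agreement, and uniform integrity from R-broadcast, then establish causal order by applying Definition~\ref{def:safe} at each hop (delivery of $m$ precedes sending of $m'$ on a safe link, hence receipt of $m$ precedes receipt of $m'$, hence delivery order is preserved) and propagating transitively through the unpartitioned network. Your explicit induction on the happens-before chain and your treatment of the newly-safe-link drain merely spell out what the paper compresses into ``transitively reach all correct processes'' and delegates to the preceding lemma on ping phases and buffering.
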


\begin{proof}
  \CBROADCAST ensures \textbf{validity}, \textbf{uniform agreement}, and
  \textbf{uniform integrity},
  for it extends R-broadcast that ensures all 3 properties. \\
  We must show that \CBROADCAST ensures \textbf{causal order}:
  $\forall A,\,B,\,C,\, b_A(m) \rightarrow b_B(m')
  \implies d_C(m) \rightarrow d_C(m')$. \\
  $\forall A,\,B,\, b_A(m) \rightarrow b_B(m') \Leftrightarrow d_B(m)
  \rightarrow d_B(m') \Leftrightarrow d_B(m) \rightarrow s_B(m') \Leftrightarrow
  d_B(m) \rightarrow (\forall C\in Q, s_{BC}(m'))$.
  Since all links in $Q$ are safe links, $r_c(m) \rightarrow r_{CB}(m')$ (see
  Definition~\ref{def:safe}).  Since delivery order follows first receipt order,
  $d_C(m) \rightarrow d_C(m')$. This order on message delivery transitively
  reach all correct processes as long as the network remains unpartitioned.
\end{proof}

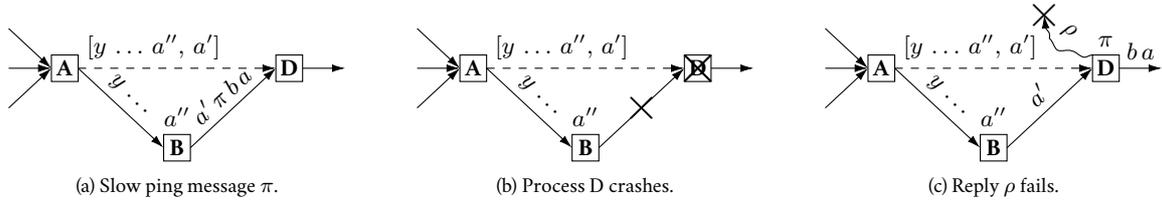
\begin{figure*}
  \begin{center}
    \subfloat[part A][\label{fig:bufferproblemA}Slow ping message $\pi$.]
    {
\begin{tikzpicture}[scale=1]
  
  \small
  
  \newcommand\X{210/5pt};
  \newcommand\Y{30pt};
  
  \draw[->] ( -0.5*\X, 0.5*\Y) -- ( -5+0*\X, 0*\Y);
  \draw[->] ( -0.5*\X, 0*\Y) -- ( -5+0*\X, 0*\Y);
  \draw[->] ( -0.5*\X, -0.5*\Y) -- ( -5+0*\X, 0*\Y);  

  \draw[fill=white] (0*\X, 0*\Y) node{\textbf{A}} +(-5pt, -5pt) rectangle +(5pt, 5pt);
  \draw[fill=white] (1*\X, -1*\Y) node{\textbf{B}} +(-5pt, -5pt) rectangle +(5pt, 5pt);
  \draw (1*\X, 5-1*\Y) node[anchor=south]{$a''$};
  \draw[fill=white] (2*\X,  0*\Y) node{\textbf{D}} +(-5pt, -5pt) rectangle +(5pt, 5pt);

  \draw[->, dashed] (5+0*\X, 0*\Y) -- (-5+2*\X, 0*\Y);

  \draw[->](5+0*\X, 0*\Y) node[anchor=south west]{$[y\, \ldots \,a'',\,a']$} -- node[sloped, above]{$y\, \dots$} (-5+1*\X, -1*\Y); 
  \draw[->](5+1*\X, -1*\Y) -- node[sloped, above]{$a'\,\pi\,b\,a$} (-5+2*\X, 0*\Y); 

  \draw[->](5+2*\X, 0*\Y) -- ( 2.5*\X, 0*\Y);
\end{tikzpicture} }
    \hspace{20pt}
    \subfloat[part B][\label{fig:bufferproblemB}Process~D crashes.]
    {
\begin{tikzpicture}[scale=1]
  
  \small
  
  \newcommand\X{210/5pt};
  \newcommand\Y{30pt};
  
  \draw[->] ( -0.5*\X, 0.5*\Y) -- ( -5+0*\X, 0*\Y);
  \draw[->] ( -0.5*\X, 0*\Y) -- ( -5+0*\X, 0*\Y);
  \draw[->] ( -0.5*\X, -0.5*\Y) -- ( -5+0*\X, 0*\Y);  

  \draw[fill=white] (0*\X, 0*\Y) node{\textbf{A}} +(-5pt, -5pt) rectangle +(5pt, 5pt);
  \draw[fill=white] (1*\X, -1*\Y) node{\textbf{B}} +(-5pt, -5pt) rectangle +(5pt, 5pt);
  \draw (1*\X, 5-1*\Y) node[anchor=south]{$a''$};
  \draw[fill=white] (2*\X,  0*\Y) node{\textbf{D}} +(-5pt, -5pt) rectangle +(5pt, 5pt);

  \draw[->, dashed] (5+0*\X, 0*\Y) -- (-5+2*\X, 0*\Y);

  \draw[->](5+0*\X, 0*\Y) node[anchor=south west]{$[y\, \ldots \,a'',\,a']$} -- node[sloped, above]{$y\,\ldots$} (-5+1*\X, -1*\Y); 
  \draw[->](5+1*\X, -1*\Y) -- 
  node{\LARGE$\times$}(-5+2*\X, 0*\Y); 

  \draw[thick] (-5+2*\X, -5+0*\Y) -- (5+2*\X, 5+0*\Y);
  \draw[thick] (-5+2*\X, 5+0*\Y) -- (5+2*\X, -5+0*\Y);

  \draw[->](5+2*\X, 0*\Y) -- ( 2.5*\X, 0*\Y);
\end{tikzpicture} }
    \hspace{20pt}
    \subfloat[part C][\label{fig:bufferproblemC}Reply $\rho$ fails.]
    {
\begin{tikzpicture}[scale=1]
  
  \small
  
  \newcommand\X{210/5pt};
  \newcommand\Y{30pt};
  
  \draw[->] ( -0.5*\X, 0.5*\Y) -- ( -5+0*\X, 0*\Y);
  \draw[->] ( -0.5*\X, 0*\Y) -- ( -5+0*\X, 0*\Y);
  \draw[->] ( -0.5*\X, -0.5*\Y) -- ( -5+0*\X, 0*\Y);  

  \draw[fill=white] (0*\X, 0*\Y) node{\textbf{A}} +(-5pt, -5pt) rectangle +(5pt, 5pt);
  \draw[fill=white] (1*\X, -1*\Y) node{\textbf{B}} +(-5pt, -5pt) rectangle +(5pt, 5pt);
  \draw (1*\X, 5-1*\Y) node[anchor=south]{$a''$};
  \draw[fill=white] (2*\X,  0*\Y) node{\textbf{D}} +(-5pt, -5pt) rectangle +(5pt, 5pt);
  \draw (2*\X, 5-0*\Y) node[anchor=south]{$\pi$};

  \draw[->, dashed] (5+0*\X, 0*\Y) -- (-5+2*\X, 0*\Y);

  \draw[->](5+0*\X, 0*\Y) node[anchor=south west]{$[y\, \ldots \,a'',\,a']$} -- node[sloped, above]{$y\,\ldots$} (-5+1*\X, -1*\Y); 
  \draw[->](5+1*\X, -1*\Y) -- node[sloped, above]{$a'$} (-5+2*\X, 0*\Y); 

  \draw[->, decorate, decoration={snake, amplitude=0.3mm}](-5+2*\X, 5+0*\Y)
  to[out=180, in=-90] node[sloped, above]{$\rho$}(1.45*\X, 20+0*\Y)
  node{\LARGE$\times$};

  \draw[->](5+2*\X, 0*\Y) -- node[above]{$b\,a$}( 2.5*\X, 0*\Y);
\end{tikzpicture} }
    \caption{\label{fig:bufferproblem}Buffers may grow unbounded due to network
      conditions.}
  \end{center}
\end{figure*}

\begin{algorithm}
  
\SetKwProg{Function}{function}{}{}
\SetKwProg{Upon}{upon}{}{}
\SetKwProg{Initially}{INITIALLY:}{}{}
\SetKwProg{Dissemination}{DISSEMINATION:}{}{}
\SetKwProg{Safety}{SAFETY:}{}{}

\small

\DontPrintSemicolon
\LinesNumbered

\Initially {} {
  $Q$ \tcp*{$p$'s neighborhood, FIFO links}
  $B \leftarrow \varnothing$ \tcp*{Map link $\rightarrow$ buffered messages}
  \BlankLine
  $counter \leftarrow 0$ \tcp*{Control message identifier}
}

\BlankLine

\Safety {} {
  \Upon{$\textup{open}(q)$} {
    \If{$|Q|>1$} {
      $counter \leftarrow counter+1$ \;
      $Q \leftarrow Q \setminus q$ \tcp*{is unsafe}
      $B[q] \leftarrow \varnothing$ \tcp*{initialize buffer}
      $\textup{ping}(p,\, q,\, counter)$ \label{line:sendlocked} \tcp*{send $\pi$} 
    }
  }
  
  \BlankLine
  
  \Upon{$\textup{receivePing}(from,\, to,\, id)$ \tcp*[f]{$to=p$}} {
    $\textup{pong}(from,\, to,\, id)$ \label{line:sendack} \tcp*{send $\rho$}
  }
  
  \BlankLine
  
  \Upon{$\textup{receivePong}(from,\, to,\, id)$ \tcp*[f]{$from=p$}} {
    \If {$to \in B$} {
      \lForEach {$m \in B[to]$} {$\textup{sendTo}(to,\, m)$
        \label{line:emptybuffer}}
      $B \leftarrow B \setminus to$ \tcp*{remove buffer}
      $Q \leftarrow Q \cup to$ \tcp*{now safe}
    }
  }
  
  \BlankLine
  
  \Upon{$\textup{close}(q)$} {
    $B \leftarrow B \setminus q$ \;
  }
}

\BlankLine

\Dissemination {} {

  \Function{$\CBROADCAST(m)$ } {
    $\textup{R-broadcast}(m)$ \label{line:rbroadcast}\;
  }
  
  \BlankLine
  
  \Upon{$\textup{R-deliver}(m)$ \label{line:rdeliver}} {
    \lForEach {$q \in B$} {$B[q] \leftarrow B[q] \cup m$
      \label{line:bufferforward} \tcp*[f]{buffers}}
    $\textup{PC-deliver}(m)$  \; 
  }

}

   \caption{\label{algo:bufferbroadcast}\CBROADCAST at Process $p$.}
\end{algorithm}

Algorithm~\ref{algo:bufferbroadcast} shows the small set of instructions that
implement safe links.  Figure~\ref{fig:preventivesolve} shows on an example how
it solves causal order violations. In Figure~\ref{fig:preventivesolveA},
Process~A broadcasts $a$.  In Figure~\ref{fig:preventivesolveB}, Process~A wants
to add a link to Process~D. It sends a ping message $\pi$ to Process~D (see
Line~\ref{line:sendlocked}) and awaits for the latter's reply.  We leave aside
the implementation of this send function (e.g. broadcast or routing).  While
awaiting, Process~A keeps its normal functioning and maintain a buffer of
messages associated with the unsafe link (see Line~\ref{line:bufferforward}). In
Figure~\ref{fig:preventivesolveC}, Process~A broadcasts another message $a'$. It
sends it normally to Process~B but does not send it to Process~D
directly. Instead, it buffers it. In Figure~\ref{fig:preventivesolveD},
Process~D receives Process~A's ping message $\pi$. Since links are FIFO, it
implicitly means that Process~D also received $a$. Process~D sends a reply
$\rho$ to Process~A (see Line~\ref{line:sendack}). $\rho$ can travel through any
communication mean. In Figure~\ref{fig:preventivesolveE}, Process~A receives
$\rho$. Consequently, Process~A knows that Process~D received and delivered at
least $a$ and all preceding messages. It empties the buffer of messages to
Process~D (see Line~\ref{line:emptybuffer}). Afterwards, the new link is
safe. Process~A uses the new link normally.

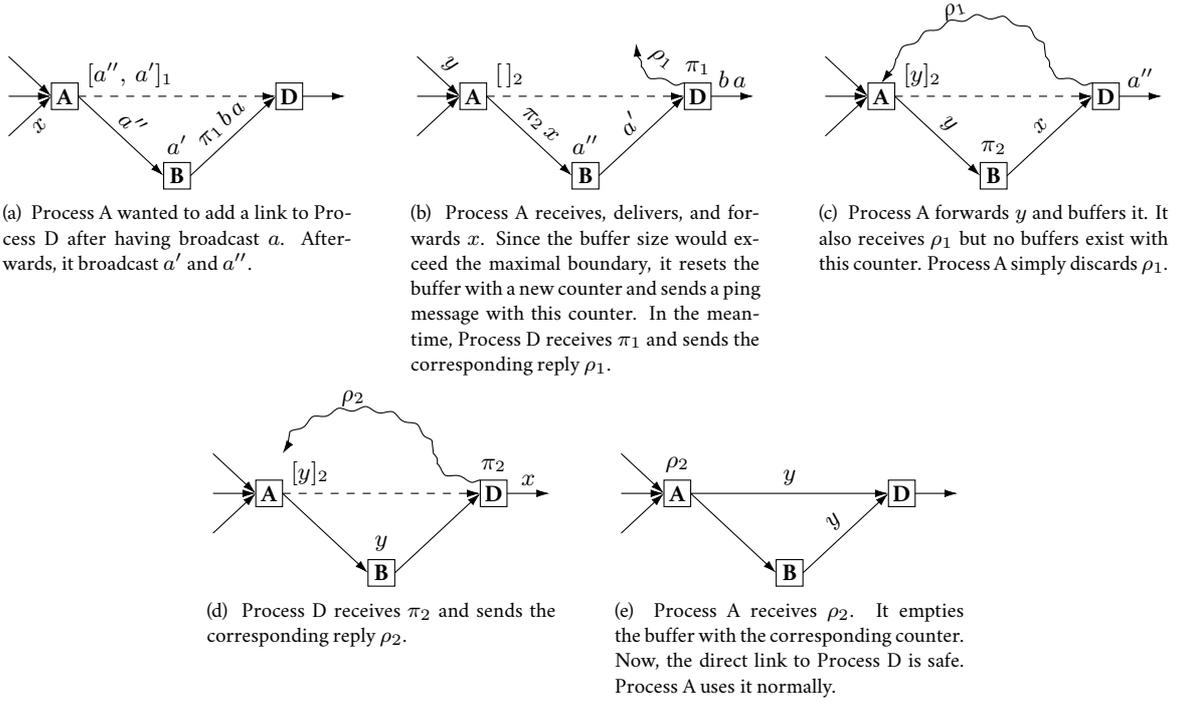
\begin{figure*}
  \begin{center}
    \subfloat[Part A][\label{fig:buffersolveA}
    Process~A wanted to add a link to Process~D after having
    broadcast $a$. Afterwards, it broadcast $a'$ and $a''$.]
    {
\begin{tikzpicture}[scale=1]
  
  \small
  
  \newcommand\X{210/5pt};
  \newcommand\Y{30pt};
  
  \draw[->] ( -0.5*\X, 0.5*\Y) -- ( -5+0*\X, 0*\Y);
  \draw[->] ( -0.5*\X, 0*\Y) -- ( -5+0*\X, 0*\Y);
  \draw[->] ( -0.5*\X, -0.5*\Y) -- node[below,sloped]{$x$} ( -5+0*\X, 0*\Y);  

  \draw[fill=white] (0*\X, 0*\Y) node{\textbf{A}} +(-5pt, -5pt) rectangle +(5pt, 5pt);
  \draw[fill=white] (1*\X, -1*\Y) node{\textbf{B}} +(-5pt, -5pt) rectangle +(5pt, 5pt);
  \draw (1*\X, 5-1*\Y) node[anchor=south]{$a'$};
  \draw[fill=white] (2*\X,  0*\Y) node{\textbf{D}} +(-5pt, -5pt) rectangle +(5pt, 5pt);

  \draw[->, dashed] (5+0*\X, 0*\Y) -- (-5+2*\X, 0*\Y);

  \draw[->](5+0*\X, 0*\Y) node[anchor=south west]{$[a'',\,a']_1$} -- 
  node[sloped, above]{$a''$} (-5+1*\X, -1*\Y); 
  \draw[->](5+1*\X, -1*\Y) -- node[sloped, above]{$\pi_1\,b\,a$} (-5+2*\X, 0*\Y); 

  \draw[->](5+2*\X, 0*\Y) -- ( 2.5*\X, 0*\Y);
\end{tikzpicture} }
    \hspace{20pt}
    \subfloat[Part B][\label{fig:buffersolveB}
    Process~A receives, delivers, and forwards $x$. Since the buffer size would 
    exceed the maximal boundary, it resets the buffer with a new counter and
    sends a ping message with this counter. In the meantime, Process~D
    receives $\pi_1$ and sends the corresponding reply $\rho_1$.]
    {
\begin{tikzpicture}[scale=1]
  
  \small
  
  \newcommand\X{210/5pt};
  \newcommand\Y{30pt};
  
  \draw[->] ( -0.5*\X, 0.5*\Y) -- node[sloped, above]{$y$}( -5+0*\X, 0*\Y);
  \draw[->] ( -0.5*\X, 0*\Y) -- ( -5+0*\X, 0*\Y);
  \draw[->] ( -0.5*\X, -0.5*\Y) -- ( -5+0*\X, 0*\Y);  

  \draw[fill=white] (0*\X, 0*\Y) node{\textbf{A}} +(-5pt, -5pt) rectangle +(5pt, 5pt);
  \draw[fill=white] (1*\X, -1*\Y) node{\textbf{B}} +(-5pt, -5pt) rectangle +(5pt, 5pt);
  \draw (1*\X, 5-1*\Y) node[anchor=south]{$a''$};
  \draw[fill=white] (2*\X,  0*\Y) node{\textbf{D}} +(-5pt, -5pt) rectangle +(5pt, 5pt);
  \draw (2*\X, 5-0*\Y) node[anchor=south]{$\pi_1$};

  \draw[->, dashed] (5+0*\X, 0*\Y) -- (-5+2*\X, 0*\Y);

  \draw[->](5+0*\X, 0*\Y) node[anchor=south west]{$[\,]_2$} -- 
  node[sloped, above]{$\pi_2\,x$} (-5+1*\X, -1*\Y); 
  \draw[->](5+1*\X, -1*\Y) -- node[sloped, above]{$a'$} (-5+2*\X, 0*\Y); 

  \draw[->, decorate, decoration={snake, amplitude=0.3mm}](-5+2*\X, 5+0*\Y)
  to[out=180, in=-90] node[sloped, above]{$\rho_1$}(1.45*\X, 20+0*\Y);

  \draw[->](5+2*\X, 0*\Y) -- node[above]{$b\,a$}( 2.5*\X, 0*\Y);
\end{tikzpicture} }
    \hspace{20pt}
    \subfloat[Part C][\label{fig:buffersolveC}
    Process~A forwards $y$ and buffers it. It also
    receives $\rho_1$ but no buffers exist with this counter. Process~A 
    simply discards $\rho_1$.]
    {
\begin{tikzpicture}[scale=1]
  
  \small
  
  \newcommand\X{210/5pt};
  \newcommand\Y{30pt};
  
  \draw[->] ( -0.5*\X, 0.5*\Y) -- ( -5+0*\X, 0*\Y);
  \draw[->] ( -0.5*\X, 0*\Y) -- ( -5+0*\X, 0*\Y);
  \draw[->] ( -0.5*\X, -0.5*\Y) -- ( -5+0*\X, 0*\Y);  

  \draw[fill=white] (0*\X, 0*\Y) node{\textbf{A}} +(-5pt, -5pt) rectangle +(5pt, 5pt);
  \draw[fill=white] (1*\X, -1*\Y) node{\textbf{B}} +(-5pt, -5pt) rectangle +(5pt, 5pt);
  \draw (1*\X, 5-1*\Y) node[anchor=south]{$\pi_2$};
  \draw[fill=white] (2*\X,  0*\Y) node{\textbf{D}} +(-5pt, -5pt) rectangle +(5pt, 5pt);

  \draw[->, dashed] (5+0*\X, 0*\Y) -- (-5+2*\X, 0*\Y);

  \draw[->](5+0*\X, 0*\Y) node[anchor=south west]{$[y]_2$} -- 
  node[sloped, above]{$y$}
  (-5+1*\X, -1*\Y); 
  \draw[->](5+1*\X, -1*\Y) -- node[sloped, above]{$x$} (-5+2*\X, 0*\Y); 

  \draw[decorate, decoration={snake, amplitude=0.3mm}](-5+2*\X, 5+0*\Y)
  to[out=180, in=-90] (1.45*\X, 20+0*\Y);
  \draw[->, decorate, decoration={snake, amplitude=0.3mm}](1.45*\X, 20+0*\Y)
  to[out=90+45, in=90-35] node[sloped, above]{$\rho_1$}(0*\X, 5+0*\Y);

  \draw[->](5+2*\X, 0*\Y) -- node[above]{$a''$}( 2.5*\X, 0*\Y);
\end{tikzpicture} }
    \hspace{20pt}
    \subfloat[Part D][\label{fig:buffersolveD}
    Process~D receives $\pi_2$ and sends the corresponding
    reply $\rho_2$.]
    {
\begin{tikzpicture}[scale=1]
  
  \small
  
  \newcommand\X{210/5pt};
  \newcommand\Y{30pt};
  
  \draw[->] ( -0.5*\X, 0.5*\Y) -- ( -5+0*\X, 0*\Y);
  \draw[->] ( -0.5*\X, 0*\Y) -- ( -5+0*\X, 0*\Y);
  \draw[->] ( -0.5*\X, -0.5*\Y) -- ( -5+0*\X, 0*\Y);  

  \draw[fill=white] (0*\X, 0*\Y) node{\textbf{A}} +(-5pt, -5pt) rectangle +(5pt, 5pt);
  \draw[fill=white] (1*\X, -1*\Y) node{\textbf{B}} +(-5pt, -5pt) rectangle +(5pt, 5pt);
  \draw (1*\X, 5-1*\Y) node[anchor=south]{$y$};
  \draw[fill=white] (2*\X,  0*\Y) node{\textbf{D}} +(-5pt, -5pt) rectangle +(5pt, 5pt);
  \draw (2*\X, 5-0*\Y) node[anchor=south]{$\pi_2$};

  \draw[->, dashed] (5+0*\X, 0*\Y) -- (-5+2*\X, 0*\Y);

  \draw[->](5+0*\X, 0*\Y) node[anchor=south west]{$[y]_2$} -- 
  (-5+1*\X, -1*\Y); 
  \draw[->](5+1*\X, -1*\Y) --
  (-5+2*\X, 0*\Y); 

  \draw[decorate, decoration={snake, amplitude=0.3mm}](-5+2*\X, 5+0*\Y)
  to[out=180, in=-90] (1.45*\X, 20+0*\Y);
  \draw[->, decorate, decoration={snake, amplitude=0.3mm}](1.45*\X, 20+0*\Y)
  to[out=90+45, in=90-35] node[sloped, above]{$\rho_2$}(5+0*\X, 15+0*\Y);

  \draw[->](5+2*\X, 0*\Y) -- node[above]{$x$}( 2.5*\X, 0*\Y);
\end{tikzpicture} }
    \hspace{20pt}
    \subfloat[Part E][\label{fig:buffersolveE}
    Process~A receives $\rho_2$. It empties the
    buffer with the corresponding counter. Now,  the direct link
    to Process~D is safe. Process~A uses it normally.]
    {
\begin{tikzpicture}[scale=1]
  
  \small
  
  \newcommand\X{210/5pt};
  \newcommand\Y{30pt};
  
  \draw[->] ( -0.5*\X, 0.5*\Y) -- ( -5+0*\X, 0*\Y);
  \draw[->] ( -0.5*\X, 0*\Y) -- ( -5+0*\X, 0*\Y);
  \draw[->] ( -0.5*\X, -0.5*\Y) -- ( -5+0*\X, 0*\Y);  

  \draw[fill=white] (0*\X, 0*\Y) node{\textbf{A}} +(-5pt, -5pt) rectangle +(5pt, 5pt);
  \draw (0*\X, 5-0*\Y) node[anchor=south]{$\rho_2$};
  \draw[fill=white] (1*\X, -1*\Y) node{\textbf{B}} +(-5pt, -5pt) rectangle +(5pt, 5pt);

  \draw[fill=white] (2*\X,  0*\Y) node{\textbf{D}} +(-5pt, -5pt) rectangle +(5pt, 5pt);

  \draw[->] (5+0*\X, 0*\Y) -- node[above]{$y$} (-5+2*\X, 0*\Y);

  \draw[->](5+0*\X, 0*\Y) -- (-5+1*\X, -1*\Y); 
  \draw[->](5+1*\X, -1*\Y) -- node[sloped, above]{$y$} (-5+2*\X, 0*\Y);

  \draw[->](5+2*\X, 0*\Y) -- ( 2.5*\X, 0*\Y);
\end{tikzpicture} }
    \caption{\label{fig:buffersolve}Buffers become bounded. We allow only 2
      elements in each buffer.}
  \end{center}
\end{figure*}

\begin{theorem}[\CBROADCAST is a causal broadcast]
  \CBROADCAST is a causal broadcast in both static and dynamic network settings.
\end{theorem}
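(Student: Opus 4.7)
The plan is to decompose the claim into two orthogonal concerns: the uniform reliable broadcast guarantees (validity, uniform agreement, uniform integrity) and the causal order guarantee. The reliable broadcast part is essentially free: \CBROADCAST invokes R-broadcast at Line~\ref{line:rbroadcast} and delivers exactly through the R-deliver hook at Line~\ref{line:rdeliver}, so Algorithm~\ref{algo:reliablebroadcast} already gives us validity, uniform agreement, and uniform integrity, provided the set of safe links used for dissemination still leaves the network unpartitioned (which is part of our standing assumption on the overlay). I would first state this observation explicitly so that the rest of the argument can concentrate on causal order.

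For causal order, I would split on the type of network dynamicity and appeal to the lemmas already proved. In the purely static case, there are no link additions or removals, $Q$ is immutable and every link in $Q$ is safe by construction; Theorem~\ref{theo:flooding} therefore applies verbatim and yields causal delivery. For dynamicity by removals (process departures or link closures), Lemma~\ref{lem:removals} shows that dropping links or processes never reorders causally related messages, so causal order is preserved. For dynamicity by additions (process joins or link openings), Lemma~\ref{lem:additions} shows that when a new link is activated only after completing a ping phase and flushing the associated buffer, the link becomes safe and therefore satisfies Definition~\ref{def:safe}; since \CBROADCAST (Algorithm~\ref{algo:bufferbroadcast}) keeps the new neighbor out of $Q$ until the pong is received at Line~\ref{line:sendack} and the buffer is emptied at Line~\ref{line:emptybuffer}, the invariant ``every link in $Q$ is safe'' is preserved across openings.

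With the invariant in hand, the general dynamic case reduces to composing the two dynamic lemmas: any execution interleaves openings and closings, but the inductive invariant that $Q$ contains only safe FIFO links is re-established after every event, and the causal order argument of Lemma~\ref{lem:additions} runs pointwise. Concretely, for any $b_A(m) \rightarrow b_B(m')$ and any correct $C$, we can trace a path of deliveries using only safe links existing at the moment each forwarding step occurs, and on such a path Definition~\ref{def:safe} propagates the receipt order of $m$ before $m'$ transitively up to $C$, yielding $d_C(m) \rightarrow d_C(m')$.

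The main obstacle, and the step I would treat most carefully, is the interaction between the two kinds of dynamicity along a single dissemination path: a link that is safe when $m$ is sent may be closed before $m'$ traverses it, forcing $m'$ onto a different path whose safety must also be justified. I would handle this by arguing, from the unpartitioned-network assumption together with the FIFO property and the fact that removed links cannot reorder in-flight messages (Lemma~\ref{lem:removals}), that there always remains a safe path on which the relative order of $m$ and $m'$ is preserved, so the composition of Lemmas~\ref{lem:removals} and~\ref{lem:additions} is sound and \CBROADCAST remains causal under arbitrary dynamicity.
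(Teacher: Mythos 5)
Your proof takes essentially the same route as the paper, which simply dispatches the static case to Theorem~\ref{theo:flooding} (i.e., to~\cite{friedman2004causal}) and the dynamic case to Lemmas~\ref{lem:removals} and~\ref{lem:additions}. Your additional care about the interleaving of additions and removals along a single dissemination path goes beyond the paper's one-line argument, but it does not change the underlying decomposition.
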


\begin{proof}
  For static networks, it comes from~\cite{friedman2004causal}. For dynamic
  networks, it comes from Lemmas~\ref{lem:removals}~and~\ref{lem:additions}.
\end{proof}

\subsection{Bounding space consumption}

Algorithm~\ref{algo:bufferbroadcast} ensures causal delivery of messages even in
dynamic network settings. Compared to the original causal broadcast for static
networks~\cite{friedman2004causal}, it uses an additional local structure:
buffers of messages. It associates a buffer to each new unsafe links. We assumed
that the size of these buffer stays small in general, for it depends on the time
taken by the ping phase which is assumed short. However, network conditions may
invalidate this assumption. Figure~\ref{fig:bufferproblem} depicts scenarios
where buffers grow out of acceptable boundaries. In
Figure~\ref{fig:bufferproblemA}, the issue comes from high transmission delays
from Process~A to Process~B, and from Process~B to Process~D compared to the
number of messages to broadcast and forward. The ping message $\pi$ did not
reach Process~D yet that the buffer contains a lot of messages. In
Figure~\ref{fig:bufferproblemB}, the issue comes from the departure of
Process~D. Depending on network settings, Process~A may not be able to detect
Process~D's departure. The former will never receive the awaited reply and the
buffer will grow forever. In Figure~\ref{fig:bufferproblemC}, the reply $\rho$
itself fails to reach Process~A. For the recall, this message can travel to
Process~A by any communication mean, including unreliable ones. If this fails,
Process~A's buffer to Process~D will grow
forever. 

\begin{algorithm}
  
\SetKwProg{Function}{function}{}{}
\SetKwProg{Upon}{upon}{}{}
\SetKwProg{Initially}{INITIALLY:}{}{}
\SetKwProg{Dissemination}{DISSEMINATION:}{}{}
\SetKwProg{Buffer}{BOUNDING BUFFERS:}{}{}
\SetKwProg{Failure}{HANDLING FAILURES:}{}{}

\small

\DontPrintSemicolon
\LinesNumbered

\Initially {} {
  $B$ \tcp*{link $\rightarrow$ buffered messages}
  \BlankLine
  $I \leftarrow \varnothing$ \tcp*{message id $\leftrightarrow$ link}
  $R \leftarrow \varnothing$ \tcp*{link $\rightarrow$ number of retries}
  \BlankLine
  $maxSize \leftarrow \infty $ \;
  $maxRetry \leftarrow \infty$ \;
}

\BlankLine

\Buffer {} {

  \Upon{$\textup{ping}(from,\, to,\, id)$}{
    \lIf{$q \not\in R$} {$R[q] \leftarrow 0$}
    $I[id] \leftarrow to$ \;
  }

  \BlankLine
  
  \Upon{$\textup{receiveAck}(from,\, to,\, id)$}{
    $I \leftarrow I \setminus id$ \;
    $R \leftarrow R \setminus to$ \;
  }
  
  \BlankLine

  \Upon{$\textup{PC-deliver}(m)$} {
    \ForEach{$q \in B$ \textbf{\textup{such that}} $|B[q]| > maxSize$
      \label{line:maxsize}}{
      $\textup{retry}(q)$ \;
    }
  }

  \BlankLine

  \Upon{$\textup{close}(q)$} {
    \lFor{$i \in I$ \textbf{\textup{such that}} $I[i]=q$}{$I \leftarrow I \setminus i$}
    $R \leftarrow R \setminus q$ \;    
  }

  \BlankLine

  \Function{$\textup{retry}(q)$}{
    \lFor{$i \in I$ \textbf{\textup{such that}} $I[i]=q$}{$I \leftarrow I \setminus i$}
    
    \If{$q \in R$} {
      $R[q] \leftarrow R[q]+ 1$ \;
      \lIf{$R[q] \leq maxRetry$} {\textup{open}($q$)\label{line:reset}}
      \lElse{\textup{close}($q$)}    
    }
  }
}  

\BlankLine

\Failure {} {

  \Upon{$\textup{timeout}(from,\, to,\, id)$}{
    \lIf {$id \in I$} {\textup{retry}($to$)\label{line:timeout}}
  }

}

   \caption{\label{algo:boundingbuffer}Bounding the size of buffers and handling
    network failures.}
\end{algorithm}

Algorithm~\ref{algo:boundingbuffer} solves the unbounded growth issue of
buffers. It solves the issue from the buffer owner's
perspective. Figure~\ref{fig:buffersolve} shows how this algorithm bounds the
size of buffers. In Figure~\ref{fig:buffersolveA}, Process~A broadcast $a$; then
wanted to add a link to Process~D so it sent a ping message; then broadcast $a'$
and $a''$ so it buffered them. We see that the ping message $\pi_1$ carries a
counter. The new buffer is identified by the same counter. In
Figure~\ref{fig:buffersolveB}, Process~A receives, delivers, and forwards the
message $x$. Each message delivery increases the size of current buffers. The
algorithm checks if the size of the buffer exceeds the configured bound (see
Line~\ref{line:maxsize}). Adding $x$ to the buffer would exceed the bound of $2$
elements. This is the first ping phase failure. Process~A simply restarts the
ping phase: it resets the buffer and sends another ping message $\pi_2$ (see
Line~\ref{line:reset}). The counter of the reset buffer is the one of the new
ping message. In the meantime, Process~D receives $\pi_1$ and sends the
corresponding reply $\rho_1$. In Figure~\ref{fig:buffersolveC}, Process~A
receives a broadcast message $y$. It delivers it, checks if the buffer can admit
it, adds the message to the buffer, and forwards it. Process~A also receives the
first reply $\rho_1$ but discards it, for no buffers have such counter. In
Figure~\ref{fig:buffersolveD}, Process~D receives $\pi_2$ and sends the
corresponding reply $\rho_2$ to Process~A. In Figure~\ref{fig:buffersolveE},
Process~A receives $\rho_2$. Since the corresponding buffer exists, it empties
it. The new link is now safe to use for causal broadcast.

While it solves the issue of unbounded buffers, it also brings another
issue. For instance, if the maximal size of buffers is too small, it could stuck
the protocol in a loop of retries. We address this issue by bounding the number
of retries. However, it means that the ping phase could fail
entirely. Causal broadcast must not employ the new link. In extreme cases, it
could cause partitions in the causal broadcast overlay network. It would violate
the uniform agreement property of causal broadcast. Thus, we assume a
sufficiently large maximal bound. It never creates partitions, for most links
become safe, and the failing ones are replaced over time thanks to
network dynamicity.

Other orthogonal improvements are possible. For instance, causal broadcast could
use reliable communication means to acknowledge the receipt of the ping
message. The time taken between the sending and the receipt of the
reply would increase when failures occur. However, it would take less
time than resetting the buffering phase.

\subsection{Complexity}
\label{subsec:complexity}

We review and discuss about the complexity of \CBROADCAST. We distinguish the
complexity brought by 
\begin{inparaenum}[(i)]
\item the overlay network,
\item reliable broadcast,
\item and causal ordering.
\end{inparaenum}

\noindent \textbf{Overlay network.} Processes cannot afford the upkeep of full
membership in large and dynamic systems. Instead, each process builds a partial
view the size of which is considerably smaller than the actual network size.  To
maintain these partial views, each process runs a peer-sampling
protocol~\cite{bertier-d2ht,jelasity2007gossip,jelasity2009tman}.  Some
peer-sampling protocols provides partial views the size of which scales
logarithmically with the actual network size~\cite{nedelec2017adaptive}.  The
number of messages forwarded by each process for each broadcast remains small,
for this number is equal to their view size: $O(Q)$ where $Q$ is the size of the
partial view.

\noindent \textbf{Reliable broadcast.} Gossiping constitutes an efficient mean
to disseminate messages to all
processes~\cite{demers1987epidemic,birman1999bimodal}.
Algorithm~\ref{algo:reliablebroadcast} shows that it relies on a local structure
to guarantee that messages are delivered exactly once. This structure grows
linearly with the number of processes in the network: $O(N)$. In addition, each
message piggybacks a pair $\langle process,\, counter \rangle$ that identifies
it: $O(1)$. Checking if a message is a duplicate takes constant time: $O(1)$.

\noindent \textbf{Causal ordering.} Causal ordering primarily uses FIFO links to
broadcast messages which implies a constant size overhead on messages
$O(1)$. Most space complexity is hidden by FIFO links including that of buffered
messages ensuring safety.
\CBROADCAST maintains one buffer per unsafe link during its ping phase.  We
assume that this time is short so the number of buffered messages stays
small. As shown in Section~\ref{sec:proposal}, network conditions can make this
assumption false. Algorithm~\ref{algo:boundingbuffer} allows to bound the size
of each buffer and handle network failures.

\noindent \textbf{Overall.}  Generated traffic remains the most important
criterion for scalability. The traffic generated by \CBROADCAST for each process
and for each broadcast only depends on the size of messages and the overlay
network chosen to broadcast messages. The size of messages is an irreducible
variable; and many protocols designed to build overlay networks achieve high
scalability in terms of network size and
dynamicity~\cite{bertier-d2ht,jelasity2007gossip,jelasity2009tman,nedelec2017adaptive,voulgaris2005cyclon}.
Consequently, \CBROADCAST achieve high scalability in both these terms
too. \CBROADCAST is efficient, for the upper bound on the complexity of delivery
execution time does not depend on any factor.

However, to ensure causal order, \CBROADCAST may not use all outgoing links in
dynamic settings, for some may be temporarily unsafe. This negatively impacts
the overlay network properties. The next section shows an experiment that
highlights the influence of \CBROADCAST's way to ensure causal order on the
underlying overlay network. In particular, it shows that the number of hops
required by a broadcast message to reach all processes increases when delays on
transmission increase.

\section{Experimentation}
\label{sec:experimentation}

\begin{figure}
  \begin{center}
    \includegraphics[width=0.99\columnwidth]{./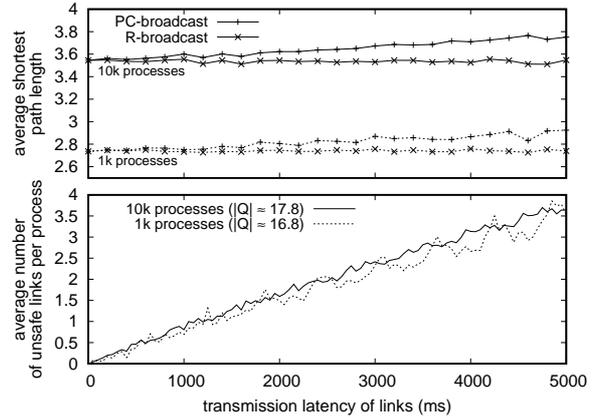}
    \caption{\label{fig:delay}Impact of \CBROADCAST on the overlay network.}
  \end{center}
\end{figure}

\CBROADCAST provides causal order with constant size message
overhead. This feature comes at a cost: at first, new communication
means are disable for causal broadcast. In this section, we evaluate
the impact of \CBROADCAST on the message delivery in a specific
overlay network that corresponds to random graphs. The experiments run
on the \PEERSIM simulator~\cite{montresor2009peersim} that allows
simulations to reach high scale in terms of number of processes. Our
implementation is available on the Github platform at
\url{http://github.com/chat-wane/peersim-pcbroadcast}.

\noindent \textbf{Objective:} To observe the transmission delay introduced by
\CBROADCAST on message delivery. We expect the delay to increase as the latency
increase.

\noindent \textbf{Description:} We build an overlay network with a topology
close to random graphs using \SPRAY~\cite{nedelec2017adaptive}. The overlay
networks comprises 1k, and 10k processes. Networks are dynamic. Each process'
neighborhood $Q$ changes at least once every 60 seconds; and on average twice
every 60 seconds. Each exchange involves two processes that both add and remove
half of their partial view.  Links are FIFO, bidirectional, and have
transmission delays. The delay increases over time up to 5 seconds.
Consequently, the duration of ping phases increases during the experiment.
Links become safe slower. \\
We measure the shortest path length from a random set of processes to all other
processes. It represents the average number of hops taken by broadcast messages
before being received and delivered by all. 
Multiplied by the transmission latency of links, it represents the transmission
delay of broadcast
messages before being received by all processes. \\
We perform measurements on 2 broadcast protocols: \CBROADCAST and R-broadcast.
R-broadcast uses all available links to broadcast messages in a gossip fashion.
Transmission delays before delivery are similar to piggybacking
approaches~\cite{almeida2008interval,fidge1988timestamps,mattern1989virtual,singhal1992efficient,birman1987reliable,hadzilacos1993fault,mostefaoui2017probabilistic}
without accounting for the time taken to send large messages (e.g. each message
convey a vector clock of 10k entries when the network comprises 10k processes).
Larger packets induces larger transmission time.

\noindent \textbf{Results:} Figure~\ref{fig:delay} shows the result of the
experiment. The y-axis depicts the delay set on message transmission for each
link. The top part of the figure shows the average shortest path length. The
bottom part of the figure shows the average number of unsafe links per process
that cannot be used for causal broadcast yet.
\begin{itemize}[leftmargin=*]
\item Figure~\ref{fig:delay} shows that both R-broadcast and \CBROADCAST deliver
  message quickly to all processes. The overlay network guarantee that paths
  stay short and logarithmically scaling with the number of random neighbors in
  partial views.
\item The top part of Figure~\ref{fig:delay} shows that measurements made on
  \CBROADCAST increases while measurements made on R-broadcast stay
  constant. R-broadcast uses all neighbors provided by \SPRAY while \CBROADCAST
  excludes links still in buffering phase. The more latency on transmission, the
  longer the buffering phase. The bottom part of Figure~\ref{fig:delay} shows
  that the number of elements in the buffers increases accordingly.
\item Figure~\ref{fig:delay} shows that the growth of path length stays small
  even when transmission delays become high. The number of elements in buffers
  stays small because the buffering phase takes a constant number of hops to
  complete: at most 3 hops. The path length grows even slower, for removing 3
  among 17 links has restricted impact on overlay networks close to random
  graphs.
\end{itemize}

This experimentation shows that even under bad network conditions (high
transmission delays) and using highly dynamic overlay networks (random
peer-sampling), the number of unsafe links remains small. The negative impact
expected on transmission time before message delivery remains small. In
practice, we expect smaller network transmission delays, and overlay networks
less subject to neighborhood changes (e.g. exploiting user preferences, or
geolocalisation). In such settings, we expect \CBROADCAST to have a negligible
negative impact on the overlay network properties.

The next section reviews state-of-the-art techniques designed to maintain causal
order among messages.

\section{Related work}
\label{sec:relatedwork}

This section reviews the related work of logical clocks. It goes from
piggybacking approaches to vector-based approaches. Then, it reviews explicit
dependency tracking and dissemination-based approaches.

\noindent \textbf{Piggybacking
  approaches~\cite{birman1987reliable,hadzilacos1993fault}}. A trivial way to
ensure causal ordering of messages is to piggyback all causally related messages
since the last broadcast message along with the new broadcast message. Even by
piggybacking the identifiers of messages instead of messages themselves, the
broadcast message size may increase quickly depending on the
application. \CBROADCAST does not piggyback all preceding messages in broadcast
messages. However, an accumulation of messages arises during buffering. As
discussed in Section~\ref{subsec:complexity}, we can assume that links quickly
become safe so the buffer size stays small, and we can easily set a threshold on
the buffer size.

\noindent \textbf{Vector clock
  approaches~\cite{fidge1988timestamps,mattern1989virtual}.}  A vector clock is
a vector of monotonically increasing counters.  It encodes the partial order of
messages using this vector: $VC(m) < VC(m') \implies m \rightarrow m'$.  Before
delivering a message, processes using vector-based broadcast check if the vector
of the message is ready regarding their local vector. If it detects any missing
preceding message, the process delays the delivery.  To implement this
vector-based broadcast
\begin{inparaenum}[(i)]
\item each process must maintain a vector locally;
\item each message must piggyback such vector;
\item there is 1 counter per process that ever broadcast a message.
\end{inparaenum}
To accurately track causality, processes cannot share their entry. To safely
track causality, processes cannot reclaim entries. Hence, even with
\textbf{compaction approaches~\cite{singhal1992efficient}}, the vectors grow
linearly in terms of number of processes that ever broadcast a message.
In~\cite{almeida2008interval}, the complexity is reduced to the actual number of
processes in the network.  Still, these approaches do not scale, particularly in
dynamic networks subject to churn and failures. \\
In comparison to these vector-based approaches, our approach reduces the
generated traffic of causal broadcast by a factor of $N$ in the most common
context where processes have partial knowledge of the network membership. \\
\textbf{Probabilistic approaches~\cite{mostefaoui2017probabilistic}} sacrifices
on causality tracking accuracy: messages may be delivered out of order under a
computable boundary. The size of control information in messages
depends on the desired boundary. \\
Unlike vector-based approach, our broadcast cannot state if two messages are
concurrent, accurate causal delivery is a feature provided by default by the
propagation scheme. Once safe, FIFO links deliver message in causal order
without further delay. The speed of delivery is that of FIFO links.

\noindent Explicitly tracking \textbf{semantic dependencies} allows broadcast
protocols to reduce the size of piggybacked control
information~\cite{bailis2013bolton,lloyd2011cops,mukund2014optimized}. For
instance, when Alice comments Bob's picture, everyone must receive the picture
before the comment. The broadcast message only conveys one semantic
dependency. When Alice comments multiple pictures at once, the broadcast message
conveys all dependencies.  Message overhead increases linearly with the number
of semantic dependencies. To track semantic dependencies, causal broadcast
becomes application dependent. Instead our approach remains
application-agnostic. Comments, pictures, etc. are events that relate to all
preceding events. When Alice comments Bob's picture, everyone will receive this
event before the former event and all other preceding events. Whatever the
number of preceding events, broadcast messages only convey constant size control
information.

\noindent Preserving causal order using \textbf{dissemination paths} reduces
generated traffic by keeping message overhead
constant~\cite{bravo2017saturn,friedman2004causal}. State-of-the-art does not
support dynamic systems~\cite{friedman2004causal}, or supports it using
epochs~\cite{bravo2017saturn} that confines usability to small scale systems
where failures are uncommon. In comparison, we designed \CBROADCAST to handle
large and dynamic systems. Our approach provides a lightweight and efficient
mean to reconfigure dissemination paths using local knowledge without impairing
causal order.  Saturn~\cite{bravo2017saturn} along with \CBROADCAST could ease
online changes in configuration while improving its resilience to failures and topology
changes.

\section{Conclusion}
\label{sec:conclusion}

In this paper, we described a non-blocking causal broadcast protocol that breaks
scalability barriers for large and dynamic systems. Using \CBROADCAST, message
overhead and delivery execution time remain constant.
Causal broadcast finally becomes an affordable and efficient middleware for
large scale distributed applications running in dynamic environments.

As future work, we plan to investigate on reducing the space complexity of
reliable broadcast. Section~\ref{subsec:complexity} reviews structures with
linearly increasing space consumption. We can reduce this complexity in static
systems. We can prune the structure from already received messages, for we know
that the number of duplicates is equal to the number of incoming
links~\cite{raynal2013distributed}. Unfortunately, this does not hold in dynamic
systems. We would like to investigate on a way to prune the structure in such
settings. This would make causal broadcast scalable as well on generated traffic
as on space consumption.

We also plan to investigate on retrieving partial order of
events. Section~\ref{sec:relatedwork} states that vector-based approaches allows
to compare an event with any other event. They can decide on whether one
precedes the other, or they are concurrent. They can build the partial order of
event using this knowledge. Our approach cannot by default. However, in extreme
settings where the overlay network is fully connected, we can assign a vector to
each received message using local knowledge only, and without message
overhead. We would like to investigate on a way to build these vectors locally
in more realistic settings where processes have partial knowledge of the
membership.

\bibliographystyle{IEEEtran}
\bibliography{bibliographie}

\begin{thebibliography}{10}
\providecommand{\url}[1]{#1}
\csname url@samestyle\endcsname
\providecommand{\newblock}{\relax}
\providecommand{\bibinfo}[2]{#2}
\providecommand{\BIBentrySTDinterwordspacing}{\spaceskip=0pt\relax}
\providecommand{\BIBentryALTinterwordstretchfactor}{4}
\providecommand{\BIBentryALTinterwordspacing}{\spaceskip=\fontdimen2\font plus
\BIBentryALTinterwordstretchfactor\fontdimen3\font minus
  \fontdimen4\font\relax}
\providecommand{\BIBforeignlanguage}[2]{{%
\expandafter\ifx\csname l@#1\endcsname\relax
\typeout{** WARNING: IEEEtran.bst: No hyphenation pattern has been}%
\typeout{** loaded for the language `#1'. Using the pattern for}%
\typeout{** the default language instead.}%
\else
\language=\csname l@#1\endcsname
\fi
#2}}
\providecommand{\BIBdecl}{\relax}
\BIBdecl

\bibitem{hadzilacos1994modular}
V.~Hadzilacos and S.~Toueg, ``A modular approach to fault-tolerant broadcasts
  and related problems,'' Ithaca, NY, USA, Tech. Rep., 1994.

\bibitem{nakamoto2009bitcoin}
S.~Nakamoto, ``Bitcoin: A peer-to-peer electronic cash system,'' 03 2009.

\bibitem{borthakur2013petabyte}
\BIBentryALTinterwordspacing
D.~Borthakur, ``Petabyte scale databases and storage systems at facebook,'' in
  \emph{Proceedings of the 2013 ACM SIGMOD International Conference on
  Management of Data}, ser. SIGMOD '13.\hskip 1em plus 0.5em minus 0.4em\relax
  New York, NY, USA: ACM, 2013, pp. 1267--1268. [Online]. Available:
  \url{http://doi.acm.org/10.1145/2463676.2463713}
\BIBentrySTDinterwordspacing

\bibitem{nedelec2016crate}
\BIBentryALTinterwordspacing
B.~N{\'e}delec, P.~Molli, and A.~Mostefaoui, ``Crate: Writing stories together
  with our browsers,'' in \emph{Proceedings of the 25th International
  Conference Companion on World Wide Web}, ser. WWW '16 Companion.\hskip 1em
  plus 0.5em minus 0.4em\relax Republic and Canton of Geneva, Switzerland:
  International World Wide Web Conferences Steering Committee, 2016, pp.
  231--234. [Online]. Available:
  \url{http://dx.doi.org/10.1145/2872518.2890539}
\BIBentrySTDinterwordspacing

\bibitem{heinrich2012exploiting}
M.~Heinrich, F.~Lehmann, T.~Springer, and M.~Gaedke, ``Exploiting single-user
  web applications for shared editing: a generic transformation approach,'' in
  \emph{Proceedings of the 21st international conference on World Wide
  Web}.\hskip 1em plus 0.5em minus 0.4em\relax ACM, 2012, pp. 1057--1066.

\bibitem{demers1987epidemic}
A.~Demers, D.~Greene, C.~Hauser, W.~Irish, J.~Larson, S.~Shenker, H.~Sturgis,
  D.~Swinehart, and D.~Terry, ``Epidemic algorithms for replicated database
  maintenance,'' in \emph{Proceedings of the sixth annual ACM Symposium on
  Principles of distributed computing}.\hskip 1em plus 0.5em minus 0.4em\relax
  ACM, 1987, pp. 1--12.

\bibitem{shapiro2011comprehensive}
\BIBentryALTinterwordspacing
M.~Shapiro, N.~Pregui{\c c}a, C.~Baquero, and M.~Zawirski,
  ``\BIBforeignlanguage{Anglais}{{A comprehensive study of Convergent and
  Commutative Replicated Data Types}},'' INRIA, Rapport de recherche RR-7506,
  Jan. 2011. [Online]. Available: \url{http://hal.inria.fr/inria-00555588}
\BIBentrySTDinterwordspacing

\bibitem{bailis2013bolton}
\BIBentryALTinterwordspacing
P.~Bailis, A.~Ghodsi, J.~M. Hellerstein, and I.~Stoica, ``Bolt-on causal
  consistency,'' in \emph{Proceedings of the 2013 ACM SIGMOD International
  Conference on Management of Data}, ser. SIGMOD '13.\hskip 1em plus 0.5em
  minus 0.4em\relax New York, NY, USA: ACM, 2013, pp. 761--772. [Online].
  Available: \url{http://doi.acm.org/10.1145/2463676.2465279}
\BIBentrySTDinterwordspacing

\bibitem{lloyd2011cops}
\BIBentryALTinterwordspacing
W.~Lloyd, M.~J. Freedman, M.~Kaminsky, and D.~G. Andersen, ``Don't settle for
  eventual: Scalable causal consistency for wide-area storage with cops,'' in
  \emph{Proceedings of the Twenty-Third ACM Symposium on Operating Systems
  Principles}, ser. SOSP '11.\hskip 1em plus 0.5em minus 0.4em\relax New York,
  NY, USA: ACM, 2011, pp. 401--416. [Online]. Available:
  \url{http://doi.acm.org/10.1145/2043556.2043593}
\BIBentrySTDinterwordspacing

\bibitem{bravo2017saturn}
\BIBentryALTinterwordspacing
M.~Bravo, L.~Rodrigues, and P.~Van~Roy, ``Saturn: A distributed metadata
  service for causal consistency,'' in \emph{Proceedings of the Twelfth
  European Conference on Computer Systems}, ser. EuroSys '17.\hskip 1em plus
  0.5em minus 0.4em\relax New York, NY, USA: ACM, 2017, pp. 111--126. [Online].
  Available: \url{http://doi.acm.org/10.1145/3064176.3064210}
\BIBentrySTDinterwordspacing

\bibitem{lamport1978time}
\BIBentryALTinterwordspacing
L.~Lamport, ``Time, clocks, and the ordering of events in a distributed
  system,'' \emph{Communications of the ACM}, vol.~21, no.~7, pp. 558--565,
  Jul. 1978. [Online]. Available:
  \url{http://doi.acm.org/10.1145/359545.359563}
\BIBentrySTDinterwordspacing

\bibitem{schwarz1994detecting}
\BIBentryALTinterwordspacing
R.~Schwarz and F.~Mattern, ``Detecting causal relationships in distributed
  computations: In search of the holy grail,'' \emph{Distributed Computing},
  vol.~7, no.~3, pp. 149--174, Mar 1994. [Online]. Available:
  \url{https://doi.org/10.1007/BF02277859}
\BIBentrySTDinterwordspacing

\bibitem{charronbost1991concerning}
\BIBentryALTinterwordspacing
B.~Charron-Bost, ``Concerning the size of logical clocks in distributed
  systems,'' \emph{Information Processing Letters}, vol.~39, no.~1, pp. 11--16,
  Jul. 1991. [Online]. Available:
  \url{http://dx.doi.org/10.1016/0020-0190(91)90055-M}
\BIBentrySTDinterwordspacing

\bibitem{birman1999bimodal}
\BIBentryALTinterwordspacing
K.~P. Birman, M.~Hayden, O.~Ozkasap, Z.~Xiao, M.~Budiu, and Y.~Minsky,
  ``Bimodal multicast,'' \emph{ACM Transactions on Computer Systems}, vol.~17,
  no.~2, pp. 41--88, May 1999. [Online]. Available:
  \url{http://doi.acm.org/10.1145/312203.312207}
\BIBentrySTDinterwordspacing

\bibitem{almeida2008interval}
\BIBentryALTinterwordspacing
P.~S. Almeida, C.~Baquero, and V.~Fonte, ``Interval tree clocks,'' in
  \emph{Proceedings of the 12th International Conference on Principles of
  Distributed Systems}, ser. OPODIS '08.\hskip 1em plus 0.5em minus 0.4em\relax
  Berlin, Heidelberg: Springer-Verlag, 2008, pp. 259--274. [Online]. Available:
  \url{http://dx.doi.org/10.1007/978-3-540-92221-6_18}
\BIBentrySTDinterwordspacing

\bibitem{fidge1988timestamps}
C.~J. Fidge, ``Timestamps in message-passing systems that preserve partial
  ordering,'' vol.~10, pp. 56--66, 02 1988.

\bibitem{mattern1989virtual}
F.~Mattern, ``Virtual time and global states of distributed systems,''
  \emph{Parallel and Distributed Algorithms}, vol.~1, no.~23, pp. 215--226,
  1989.

\bibitem{singhal1992efficient}
\BIBentryALTinterwordspacing
M.~Singhal and A.~Kshemkalyani, ``An efficient implementation of vector
  clocks,'' \emph{Information Processing Letters}, vol.~43, no.~1, pp. 47--52,
  Aug. 1992. [Online]. Available:
  \url{http://dx.doi.org/10.1016/0020-0190(92)90028-T}
\BIBentrySTDinterwordspacing

\bibitem{mehdi2017slowdown}
\BIBentryALTinterwordspacing
S.~A. Mehdi, C.~Littley, N.~Crooks, L.~Alvisi, N.~Bronson, and W.~Lloyd, ``I
  can't believe it's not causal! scalable causal consistency with no slowdown
  cascades,'' in \emph{Proceedings of the 14th USENIX Conference on Networked
  Systems Design and Implementation}, ser. NSDI'17.\hskip 1em plus 0.5em minus
  0.4em\relax Berkeley, CA, USA: USENIX Association, 2017, pp. 453--468.
  [Online]. Available: \url{http://dl.acm.org/citation.cfm?id=3154630.3154668}
\BIBentrySTDinterwordspacing

\bibitem{friedman2004causal}
R.~Friedman and S.~Manor, ``Causal ordering in deterministic overlay
  networks,'' \emph{Israel Institute of Technology: Haifa, Israel}, 2004.

\bibitem{bertier-d2ht}
M.~Bertier, F.~Bonnet, A.~M. Kermarrec, V.~Leroy, S.~Peri, and M.~Raynal,
  ``D2ht: The best of both worlds, integrating rps and dht,'' in
  \emph{Dependable Computing Conference (EDCC), 2010 European}, April 2010, pp.
  135--144.

\bibitem{jelasity2007gossip}
M.~Jelasity, S.~Voulgaris, R.~Guerraoui, A.-M. Kermarrec, and M.~Van~Steen,
  ``Gossip-based peer sampling,'' \emph{ACM Transactions on Computer Systems
  (TOCS)}, vol.~25, no.~3, p.~8, 2007.

\bibitem{jelasity2009tman}
\BIBentryALTinterwordspacing
M.~Jelasity, A.~Montresor, and O.~Babaoglu, ``T-man: Gossip-based fast overlay
  topology construction,'' \emph{Computer Networks}, vol.~53, no.~13, pp. 2321
  -- 2339, 2009, gossiping in Distributed Systems. [Online]. Available:
  \url{http://www.sciencedirect.com/science/article/pii/S1389128609001224}
\BIBentrySTDinterwordspacing

\bibitem{nedelec2017adaptive}
\BIBentryALTinterwordspacing
B.~N{\'e}delec, J.~Tanke, D.~Frey, P.~Molli, and A.~Most{\'e}faoui, ``An
  adaptive peer-sampling protocol for building networks of browsers,''
  \emph{World Wide Web}, Aug 2017. [Online]. Available:
  \url{https://doi.org/10.1007/s11280-017-0478-5}
\BIBentrySTDinterwordspacing

\bibitem{voulgaris2005cyclon}
\BIBentryALTinterwordspacing
S.~Voulgaris, D.~Gavidia, and M.~van Steen,
  ``\BIBforeignlanguage{English}{Cyclon: Inexpensive membership management for
  unstructured p2p overlays},'' \emph{\BIBforeignlanguage{English}{Journal of
  Network and Systems Management}}, vol.~13, no.~2, pp. 197--217, 2005.
  [Online]. Available: \url{http://dx.doi.org/10.1007/s10922-005-4441-x}
\BIBentrySTDinterwordspacing

\bibitem{montresor2009peersim}
A.~Montresor and M.~Jelasity, ``Peersim: A scalable {P2P} simulator,'' in
  \emph{Proceedings of the 9th International Conference on Peer-to-Peer
  (P2P'09)}, Seattle, WA, Sep. 2009, pp. 99--100.

\bibitem{birman1987reliable}
\BIBentryALTinterwordspacing
K.~P. Birman and T.~A. Joseph, ``Reliable communication in the presence of
  failures,'' \emph{ACM Transactions on Computer Systems}, vol.~5, no.~1, pp.
  47--76, Jan. 1987. [Online]. Available:
  \url{http://doi.acm.org/10.1145/7351.7478}
\BIBentrySTDinterwordspacing

\bibitem{hadzilacos1993fault}
\BIBentryALTinterwordspacing
V.~Hadzilacos and S.~Toueg, ``Distributed systems (2nd ed.),'' S.~Mullender,
  Ed.\hskip 1em plus 0.5em minus 0.4em\relax New York, NY, USA: ACM
  Press/Addison-Wesley Publishing Co., 1993, ch. Fault-tolerant Broadcasts and
  Related Problems, pp. 97--145. [Online]. Available:
  \url{http://dl.acm.org/citation.cfm?id=302430.302435}
\BIBentrySTDinterwordspacing

\bibitem{mostefaoui2017probabilistic}
\BIBentryALTinterwordspacing
A.~Most{\'e}faoui and S.~Weiss, ``Probabilistic causal message ordering,'' in
  \emph{Proceedings of Parallel Computing Technologies: 14th International
  Conference, PaCT 2017, Nizhny Novgorod, Russia, September 4-8, 2017}.\hskip
  1em plus 0.5em minus 0.4em\relax Cham: Springer International Publishing,
  2017, pp. 315--326. [Online]. Available:
  \url{https://doi.org/10.1007/978-3-319-62932-2_31}
\BIBentrySTDinterwordspacing

\bibitem{mukund2014optimized}
\BIBentryALTinterwordspacing
M.~Mukund, G.~Shenoy~R., and S.~Suresh, ``Optimized or-sets without ordering
  constraints,'' in \emph{Distributed Computing and Networking}, ser. Lecture
  Notes in Computer Science, M.~Chatterjee, J.-n. Cao, K.~Kothapalli, and
  S.~Rajsbaum, Eds.\hskip 1em plus 0.5em minus 0.4em\relax Springer Berlin
  Heidelberg, 2014, vol. 8314, pp. 227--241. [Online]. Available:
  \url{http://dx.doi.org/10.1007/978-3-642-45249-9_15}
\BIBentrySTDinterwordspacing

\bibitem{raynal2013distributed}
M.~Raynal, \emph{Distributed algorithms for message-passing systems}.\hskip 1em
  plus 0.5em minus 0.4em\relax Springer, 2013, vol. 500.

\end{thebibliography}
  
\end{document}